%% file: main.tex
\documentclass[11pt]{article}

\usepackage{microtype}
\usepackage{graphicx}
\usepackage{booktabs} % for professional tables

\usepackage{tabularx}

\usepackage{natbib}
\usepackage{fullpage}
\usepackage[utf8]{inputenc}
\usepackage[T1]{fontenc}
\usepackage{microtype}
\usepackage{csquotes}
\usepackage[osf,sc]{mathpazo}
\RequirePackage[scaled=0.90]{helvet}
\RequirePackage[scaled=0.85]{beramono}
\RequirePackage{textcomp}
\usepackage{xcolor}
\definecolor{DarkGreen}{rgb}{0.075,0.375,0.075}
\definecolor{DarkRed}{rgb}{0.5,0.1,0.1}
\definecolor{DarkBlue}{rgb}{0.1,0.1,0.5}
\definecolor{Gray}{rgb}{0.2,0.2,0.2}
\usepackage[pdftex]{hyperref}
\hypersetup{
    unicode=false,          % non-Latin characters in AcrobatÂ¿s bookmarks
    pdftoolbar=true,        % show Acrobat toolbar?
    pdfmenubar=true,        % show Acrobat menu?
    pdffitwindow=false,      % page fit to window when opened
    pdfnewwindow=true,      % links in new window
    colorlinks=true,       % false: boxed links; true: colored links
    linkcolor=DarkBlue,          % color of internal links
    citecolor=DarkGreen,        % color of links to bibliography
    filecolor=DarkRed,      % color of file links
    urlcolor=DarkBlue,          % color of external links
    pdftitle={},
    pdfauthor={},
}
\usepackage[skip=2mm,indent=5mm]{parskip}
\usepackage[small]{caption}
\RequirePackage{amsthm,amsmath,amsfonts,amssymb}

\usepackage[capitalize,noabbrev]{cleveref}

\usepackage[utf8]{inputenc} % allow utf-8 input
\usepackage[T1]{fontenc}    % use 8-bit T1 fonts
\usepackage{hyperref}       % hyperlinks
\usepackage{url}            % simple URL typesetting
\usepackage{booktabs}       % professional-quality tables
\usepackage{amsfonts}       % blackboard math symbols
\usepackage{nicefrac}       % compact symbols for 1/2, etc.
\usepackage{microtype}      % microtypography
\usepackage{xcolor}         % colors
\usepackage{amsmath}
\usepackage{amsthm} %Theorems
\usepackage{enumitem}
\usepackage{pifont}
\usepackage{wrapfig}
\usepackage{bm}
\usepackage{bbm}
\usepackage{todonotes}
\usepackage{makecell}

\usepackage{caption}
\usepackage{subcaption}

\input{math_commands}

\usepackage[ruled,vlined]{algorithm2e}
\usepackage{etoolbox}

\theoremstyle{plain}
\newtheorem{theorem}{Theorem}[section]
\newtheorem{proposition}[theorem]{Proposition}
\newtheorem{lemma}[theorem]{Lemma}
\newtheorem{corollary}[theorem]{Corollary}
\theoremstyle{definition}
\newtheorem{definition}[theorem]{Definition}
\newtheorem{assumption}[theorem]{Assumption}
\theoremstyle{remark}

\newtheorem{example}[theorem]{Example}

\newcommand{\etbt}{\Delta^{\textit{tbt}}}

\newcommand{\etbtrstar}{\Delta_r^{\textit{tbt}^\star}}
\newcommand{\rank}{rank}
\newcommand{\req}{{\text{req}}}

\usepackage[normalem]{ulem}

\title{Leaderboard Incentives:\\ Model Rankings under Strategic Post-Training}

\author{%
  Yatong Chen\footnote{Corresponding author: \href{mailto:yatong.chen@tuebingen.mpg.de}{yatong.chen@tuebingen.mpg.de}}
  , Guanhua Zhang, Moritz Hardt}
\date{
\textit{Max Planck Institute for Intelligent Systems, Tübingen} and \textit{Tübingen AI Center}\\
}

\newcounter{daggerfootnote}

\begin{document}

\onecolumn
\maketitle

\begin{abstract}
Influential benchmarks incentivize competing model developers to strategically allocate post-training resources towards improvements on the leaderboard, a phenomenon dubbed \emph{benchmaxxing} or \emph{training on the test task}. In this work, we initiate a principled study of the incentive structure that benchmarks induce. We model benchmarking as a Stackelberg game between a benchmark designer who chooses an evaluation protocol and multiple model developers who compete simultaneously in a subgame given by the designer’s choice. Each competitor has a model of unknown latent quality and can inflate its observed score by allocating resources to benchmark-specific improvements. First, we prove that current benchmarks induce games for which no Nash equilibrium between model developers exists. This result suggests one explanation for why current practice leads to misaligned incentives, prompting model developers to strategize in opaque ways. However, we prove that under mild conditions, a recently proposed evaluation protocol, called tune-before-test, induces a benchmark with a unique Nash equilibrium that ranks models by latent quality. This positive result demonstrates that benchmarks need not set bad incentives, even if current evaluations do.
\end{abstract}

\input{Sections/introv2}
\input{Sections/related-work}

\input{Sections/model}

\input{Sections/equilibrium}

\input{Sections/tune-before-test}

\input{Sections/conclusion}

\input{Sections/acknowledgement}

\bibliographystyle{plainnat}
\bibliography{references}
\newpage
\appendix
\input{Sections/appendix}

\end{document}

%% file: math_commands.tex
\def\eqref#1{equation~\ref{#1}}
\def\1{\bm{1}}

\DeclareMathAlphabet{\mathsfit}{\encodingdefault}{\sfdefault}{m}{sl}
\SetMathAlphabet{\mathsfit}{bold}{\encodingdefault}{\sfdefault}{bx}{n}

%% file: Sections/introv2.tex
\section{Introduction}

Traditionally, machine learning benchmarks came with a fixed training set, requiring that all models under comparison train on the same data. The situation has changed with large language model benchmarks that typically only provide test data, leaving the choice of training data to the model developer. This has raised the concern that model developers can inflate benchmark performance with benchmark-specific tweaks that don’t broadly improve model capabilities. The resulting problem, called benchmaxxing or training on the test task, confounds model comparisons and may cause misleading leaderboards, as prior work shows \citep{dominguez2024training, singh2025leaderboard}. But the situation need not be a sign of cheating or wrongdoing. Rather, influential benchmarks incentivize competing model developers to strategically allocate post-training resources towards improvements on the leaderboard. Although widely recognized, there is currently no formal understanding of the incentives that benchmarks set.

In this work, we initiate a principled study of the incentive structure that benchmarks induce. We model benchmarking as a Stackelberg game between a benchmark designer and multiple competing model developers. The designer chooses an evaluation protocol, and the model developers compete in a simultaneous-move subgame given by the designer’s choice. Each competitor has a model of a latent quality --- unknown to the benchmark designer --- and can inflate the observed model score by allocating additional resources to benchmark-specific improvements at a cost. The designer aims to choose an evaluation protocol so that the resulting benchmark yields a ranking by latent quality at equilibrium when competitors best respond to each other.

\subsection{Our Results}

Our first result is descriptive and negative: Current benchmarks induce games in which generally no Nash equilibrium between model developers exists. Model developers are always incentivized to strategize in opaque ways, leading to uninterpretable leaderboards that may not reflect a ranking by latent quality. In contrast, our second result is prescriptive and positive. Under mild conditions, the designer has a cost-effective evaluation strategy that induces a subgame with a unique Nash equilibrium that ranks models by latent quality. In addition, at this equilibrium solution, model developers refrain from benchmaxxing altogether, investing no additional effort in benchmark-specific improvements.

What makes the incentive design problem challenging is that the benchmark creator has no control over the utility each model developer has for gains on the leaderboard, does not know latent capabilities, and cannot limit how much effort model developers invest. The evaluation protocol we study tunes each model on the same small amount of task-specific data before evaluation. The intuition is that a small amount of task-specific preparation levels the playing field by washing out minor benchmark-specific tweaks. Previous work showed empirically that this intervention, called tune-before-test (TbT), leads to consistent model rankings across a wide range of benchmarks \citep{zhang2025train}. Our main result adds a game-theoretic justification: Tune-before-test creates incentives that lead to rankings by latent quality at equilibrium. Surprisingly, a small amount of data suffices to realize the same effect as orders of magnitude more training, as we show.

Complementing our theoretical results, we demonstrate empirically that the assumptions of our theorem hold in a case study of a representative benchmark. After applying tune-before-test with only 3,000 steps, a model developer would have to invest at least 384,668 additional training steps to change model rankings. {This captures why TbT can be such an effective intervention: by pushing all models into a diminishing-returns regime, it greatly increases the marginal cost of further score improvements. As a result, overtaking nearby competitors requires substantially more additional effort, amplifying the asymmetry in local overtaking incentives predicted by our theory (see \Cref{fig:traj} right).}

%% file: Sections/related-work.tex
\subsection{Related Work}
Benchmarks have been the key driver of machine learning progress by enabling frictionless comparison and competition~\citep{Donoho2023DataSA} between models. 
In the traditional supervised learning paradigm, benchmarks typically come with a fixed training set and a held-out test set~\citep{Garofolo1993DarpaTA,lecun-mnisthandwrittendigit-2010,TjongKimSang2003IntroductionTT,liberman2015reproducible,hardtrecht2022patterns}, enabling relatively controlled comparisons across methods.
Although test-set reuse~\citep{Duda1974PatternCA} can erode classical statistical guarantees~\citep{Dwork2015GeneralizationIA,Dwork2015TheRH,Blum2015TheLA,Mania2019ModelSM}, fixed train/test splits have historically supported relatively robust model comparison and have helped establish widely adopted baselines, exemplified by the ImageNet challenge~\citep{Deng2009ImageNetAL,Russakovsky2014ImageNetLS} and its role in accelerating the adoption of deep learning~\citep{Krizhevsky2012ImageNetCW,Goodfellow-et-al-2016,He2015DeepRL}.

A key reason benchmarks are useful in practice is \emph{external validity} \citep{Liao2021AreWL,Salaudeen2025MeasurementTM}: performance and rankings on one benchmark often correlate with performance on related datasets and tasks~\citep{Yadav2019ColdCT,Recht2019DoIC,Miller2020TheEO}, allowing practitioners to select strong models with some confidence. 
For instance, studies have shown that ImageNet rankings transfer well to other image datasets~\citep{Kornblith2018DoBI,Salaudeen2024ImageNotAC}.
In the large language model (LLM) era, however, many influential benchmarks provide primarily test instances and a scoring protocol~\citep{Patwardhan2025GDPvalEA,Glazer2024FrontierMathAB,Zhou2023InstructionFollowingEF,Jain2024LiveCodeBenchHA}, leaving training data and post-training choices largely unconstrained~\citep{Raffel2019ExploringTL,Albalak2024ASO,Guha2025OpenThoughtsDR,Li2024DataCompLMIS}. 
This change expands the space of strategic choices available to model providers and can lead to substantial ranking variation across benchmarks~\citep{Huan2025DoesMR,Zhang2024InherentTB,Liang2023HolisticEO,open-llm-leaderboard-v2, hardt2025emerging}, even among benchmarks that aim to measure similar capabilities.

One mechanism behind the ranking variation is what \citet{dominguez2024training} call \emph{training on the test task}. 
Model providers conduct benchmark-aware post-training~\citep{Touvron2023Llama2O} and data curation~\citep{Guha2025OpenThoughtsDR} to achieve large gains on measured tasks without comparable improvements in general capability~\citep{Zhou2023DontMY,singh2025leaderboard}. 
Training on the test task differs from training on the test set~\citep{Yang2023RethinkingBA}, or data contamination~\citep{Jiang2024InvestigatingDC,Yang2023RethinkingBA,Bordt2024HowMC}, where models train directly on the test data. 
Instead, it captures optimization to the evaluation’s task distribution or protocol, such as curating instruction data that matches a benchmark’s format and rubrics or tuning with feedback aligned to its scoring procedure~\citep{dominguez2024training}.
Because such task-level alignment is difficult to detect and is often not explicitly ruled out by benchmark rules, it can confound leaderboard interpretation: scores conflate latent capability~\citep{ruan2024observational} with benchmark-specific effort~\citep{Schaeffer2023AreEA}. 
Model providers can strategically choose how much benchmark-specific effort to invest per benchmark, which yields inconsistent rankings. 

A natural way to mitigate benchmark-specific advantages is to reduce heterogeneity in how models are prepared for the evaluation. 
\citet{zhang2025train} propose tune-before-test (TbT), in which all models are fine-tuned on benchmark-specific data before evaluation to equalize preparation. 
Their empirical results show that TbT can restore ranking consistency across tasks from different domains and reveal that post-TbT scores are dominated by a low-dimensional latent capability factor. 
While \citet{zhang2025train} motivate TbT as a post-hoc correction for observed rankings, we study it as an ex ante mechanism design choice~\citep{Manheim2018CategorizingVO}. In our setting, a benchmark designer commits to an evaluation protocol (including a TbT baseline), anticipating that competing model providers will respond strategically by allocating additional post-training effort.
We ask when an equilibrium exists in this competition and whether equilibrium rankings recover a ranking by latent capability. Treating TbT as a designer-controlled policy parameter instead allows us to isolate its incentive effects and ask how much evaluator-side adaptation is required to stabilize rankings.

Relevant to our work is the literature on strategic classification \citep{bruckner2011stackelberg, hardt2016strategic}, which studies decision-making when individuals may adapt their features in response to a deployed model. See \citet{rosenfeld2024strategicml} for a survey. These interactions are typically modeled as a Stackelberg game, where a decision-maker commits to a classifier and strategic individuals best respond myopically to the classifier. An important question in this literature is how to incentivize genuine improvements rather than superficial \emph{gaming} \citep{kleinberg2020classifiers, miller2020strategic, tal2020multiagent, chen2023learning}. Our work shares this focus on incentives. In a departure from the classification setting, however, ranking inherently creates competition between participants: when one model moves up in rank, another must move down. The leaderboard therefore induces a game between competitors. This aspect of our work also connects to recent work on markets induced by predictive systems~\cite{einav2025market, sommer2025learning}. Closely related is the work on \emph{strategic ranking} \cite{liu2022strategic}, where a continuous population of individuals competes for acceptance by an institution, such as a school or an employer. The institution may set the rewards for achieving a given quantile rank within the continuous population. The results characterize how quantile-based incentive design shapes effort, welfare, and fairness at equilibrium. Our work shares this focus on rank-based competition, but studies the setting of machine learning benchmarking. Here, the designer cannot directly control the downstream rewards associated with a higher rank, such as more attention or increased reputation, since these are exogenous. Instead, the designer’s primary lever is not the reward design but the evaluation protocol itself, which needs to make sense for a machine learning benchmark. Second, \citet{liu2022strategic} model a continuum of agents, leading to a smooth equilibrium characterization with monotone effort profiles. In contrast, leaderboard competition typically involves a finite number of strategic model providers, where small performance changes can swap adjacent ranks and generate discontinuous payoff changes. These local ``just-overtake'' incentives can, in our setting, make equilibrium stability itself a central design concern. Finally, strategic ranking also relates to a vast literature in economics on designing tournaments and contests, see~\cite{corchon2007theory,connelly2014tournament, fu2019contests}, that we cannot survey here.

%% file: Sections/model.tex
\section{Preliminaries}
\paragraph{Capability, effort, and post-effort score.}
Consider a set of LLM model developers $N=\{1,\dots,n\}$ who submit a model to a benchmark leaderboard curated by a leaderboard designer. Each model has a latent variable~$\theta_i\in\mathbb{R}_{\geq 0}$ capturing general model capabilities. This abstraction is empirically motivated by evidence that performance across diverse language-model benchmarks admits a low-dimensional structure and is often dominated by a general capability factor \citep{ruan2024observational, zhang2025train}. The setting extends straightforwardly to a multi-dimensional latent capabilities vector $\vec{\theta}_i\in\mathbb{R}_{\geq 0}^d$ with a benchmark-specific coefficient vector $\mathbf{w}$ so that $\theta_i=\langle \mathbf{w}, \vec{\theta}_i\rangle.$ Without loss of generality, we index models in decreasing order of latent capability, i.e., $\theta_1 > \theta_2 > \cdots > \theta_n$. Capabilities are known to model developers but not to leaderboard designer.

Before evaluation, a model may undergo additional \emph{benchmark-specific adaptation}. %, i.e., training procedures that exploit knowledge of the benchmark task beyond the base pre-training objective \cite{dominguez2024training}.
We represent the \emph{total} amount of benchmark-specific preparation by a scalar $e\in \mathbb{R}_{\geq 0}$, which may be performed by the model developer, and/or applied uniformly by the leaderboard designer (as in tune-before-test).
Following the empirical tune-before-test approach of \cite{zhang2025train}, we use the amount of benchmark-specific
fine-tuning data (e.g., number of training examples) as a concrete and measurable proxy for this adaptation. We model benchmark performance through a \emph{post-effort score} mapping $v=v(\theta,e)$:

\begin{definition}[Post-Effort Score]
\label{def:post-effort-score}
The \emph{post-effort score} of a model with capability $\theta$ and total benchmark-specific training effort $e$ is
$
v(\theta,e): \mathbb{R}_{\geq 0}\times \mathbb{R}_{\geq 0}\to [0,1]$.

\end{definition}

The function $v(\theta,e)$ captures how capability and benchmark-specific effort translate into its actual benchmark performance.
For simplicity, we model $v$ deterministically; equivalently, our results can be interpreted in terms of expected scores under evaluation noise.

\paragraph{Leaderboard rewards.}
Let $\{R_j\}_{j=1}^n$ denote the reward assigned to the model ranked $j$, where rewards are non-increasing in rank, i.e., 
$R_1 \ge R_2 \ge \cdots \ge R_n \ge 0$.  This means higher-ranked models receive greater downstream benefits (e.g., more downstream users, higher visibility) from their leaderboard positions.

\begin{example}
\label{example:leaderboard-reward}
Two canonical leaderboard reward settings:
\begin{enumerate}
    \item \emph{Winner-take-all}: only the top-ranked model receives a reward, i.e., $R_1 > 0$ and $R_j = 0$ for all $j \neq 1$.
    \item \emph{Top-$k$ rewards}: the top $k$ models receive the same reward, i.e., $R_j = R > 0$ for $j \in [k]$ and $R_j = 0$ for $j > k$.
\end{enumerate}
\end{example}

\paragraph{Tune-before-test methodology.}
We model tune-before-test (TbT) as a designer-chosen baseline effort $\etbt \ge 0$ that applies the same amount of benchmark-specific fine-tuning to every submitted model prior to evaluation. We treat TbT as an explicit component of the evaluation protocol: the leaderboard designer commits to $\etbt$, and model developers choose any additional benchmark-specific effort in response.
Setting $\etbt = 0$ recovers standard evaluation without TbT, so the conventional leaderboard protocol is a special case of our framework.
 
\section{Stackelberg Ranking Game}
\label{sec:game}
We model leaderboard evaluation as a Stackelberg game with a single leader and multiple followers \citep{von2010leadership}. The leader (the leaderboard designer) first commits to an evaluation protocol, after which the followers (model developers) simultaneously choose benchmark-specific effort. We study outcomes in which the developers play a Nash equilibrium of the induced follower game, and the leader chooses its action, anticipating this equilibrium response. Such an outcome is often referred to as a \emph{Stackelberg–Nash equilibrium} \citep{marchesi2021leadership}.

\begin{definition}[Stackelberg Ranking Game]
The players are a leaderboard designer (the leader) and a set of model developers $N=\{1,\dots,n\}$ (the followers). Model developers have latent capabilities $(\theta_i)_{i=1}^n$, which are common knowledge among the model developers but unknown to the designer. A public rank-based reward $\{R_j\}_{j=1}^n$ assigns reward $R_j$ to the model ranked $j$.
The sequence of actions is as follows:
\begin{itemize}
\item The designer commits to a benchmark and a tune-before-test baseline $\etbt \ge 0$.
\item After observing $\etbt$, developers simultaneously choose additional effort levels $e_i \ge 0$.
\end{itemize}

Scores are realized as $v_i = v(\theta_i, e_i + \etbt)$, and models are ranked by $\textbf{v}=(v_1,\dots,v_n)$ (ties broken deterministically). Both parties' utilities are specified in \Cref{def:agent-utility} and \Cref{def:leader-utility}.

\end{definition}

\subsection{Model Developer's Utility and Externalities}
\label{sec:agent-utility}
Model developers are modeled as rational agents whose utility equals the rank-based reward minus the cost of additional effort. 
We use a cost function $c:\mathbb{R}_{\geq 0}\to\mathbb{R}_{\geq 0}$ to capture the effort costs \footnote{We assume a common cost function for all model developers for simplicity. In \Cref{prop:equivalence-cost-fn}, we show that \emph{multiplicatively separable} costs ($C_i(e_i;\theta_i)=\gamma_i\,c(e_i)$) yield an equivalent follower game after rescaling rewards.}. Because rewards depend on rank, incentives are inherently interdependent: improving one model’s rank necessarily lowers another’s. As a result, a developer’s optimal effort depends not only on its own capability and cost, but also on competitors’ efforts. Let $\textbf{e}_{-i}$ denote the vector of the efforts of all models except $i$.
%Let $v_{-i}$ denote the vector of post-effort scores of all models except $i$. 
We define the utility of developer $i$ as follows:

\begin{definition}[Model Developer's Utility]
\label{def:agent-utility}
Given a rank-based reward $(R_j)_{j=1}^n$ and a fixed tune-before-test adjustment level $\etbt\ge 0$, the utility of model $i$ with capability $\theta_i$ and effort $e_i$ is
\begin{align}
U_i(e_i; \textbf{e}_{-i}, \etbt)
=
%\lambda\!\left(\operatorname{rank}(v_i)\right)
R_{\operatorname{rank}(v_i)}
- c(e_i),
\end{align}
%where
\begin{align*}
    v_i = v(\theta_i,\etbt+e_i), ~
\operatorname{rank}(v_i)
:= 1 + \sum_{j\in N\setminus\{i\}}\mathbbm{I}\!\left\{v_j>v_i\right\}.
% ,
% \qquad
% \lambda(j):=R_j. 
\end{align*}
\end{definition}

\paragraph{Nash equilibrium of the induced follower game.}

For a fixed TbT level $\etbt$, the developers’ interaction forms the induced follower game, for which we use pure-strategy Nash equilibrium \citep{nash1950equilibirum} as the solution concept.
Intuitively, an effort profile is at equilibrium if no developer can improve its utility by unilaterally changing the amount of benchmark-specific effort it invests:

\begin{definition}[Follower Game's Nash Equilibrium]
    Fix a TbT level $\etbt$. An effort profile $\textbf{e}^*=(e_1^*,\cdots,e_n^*)$ is a \emph{pure-strategy Nash equilibrium} (PNE) of the induced follower game if, for every model developer $i \in N$, 
    $$
    e_i^* \in \arg\max_{e\ge 0} U_i(e; \textbf{e}_{-i}^*, \etbt),
    $$
where $\textbf{e}_{-i}^*$ denotes the equilibrium efforts of other models.
\end{definition}

Although real-world leaderboard competition involves repeated submissions over time, equilibrium analysis still provides a useful baseline: it clarifies whether incentives can ever settle, or whether the leaderboard instead induces persistent “arms-race” behavior.

\subsection{Leaderboard Designer's Utility}
While model developers seek to maximize rank-based rewards, the leaderboard designer evaluates a benchmark by how well its induced ranking reflects the models’ latent capabilities. 
Because the designer only observes benchmark scores, it aims to choose an evaluation protocol that yields a ranking aligned with the capability ordering, while incurring a cost for interventions such as tune-before-test. A minimal way to capture this objective is through a \emph{ranking correctness} criterion:

\begin{definition} (Leaderboard Designer's Utility)
\label{def:leader-utility}
Given a tune-before-test level $\etbt$ and a follower's effort profile $\textbf{e} = (e_1, \cdots, e_n)$ inducing post-effort scores $v_i = v(\theta_i, e_i + \etbt)$, the leaderboard designer's utility is:
\begin{align}
\label{eqn:leader-utility}
    U^L(\etbt; \textbf{e}) = R^L \cdot \mathbbm{I}[\rank(v_i) = \rank(\theta_i), \forall i] - n\cdot c^L(\etbt),
\end{align}
where $R^L \gg 0$ is the reward from achieving a capability-consistent ranking, and $c^L(\etbt)$ is the per-model cost of applying tune-before-test level $\etbt$.
\end{definition}

This binary formulation captures the designer’s core concern: whether the leaderboard correctly orders models by latent capability. One could alternatively model the designer’s objective more smoothly using a rank-correlation metric such as Kendall’s $\tau$ \citep{kendall1938rank}, which rewards partial agreement between score-based and capability rankings, minus the cost of tune-before-test adjustment.

We assume that achieving a capability-consistent ranking is the designer’s primary objective, in the sense that $R^L$ is large enough that any capability-consistent ranking is preferred to any inconsistent one, even after accounting for tune-before-test costs. We also assume that $c^L(\etbt)$ is increasing in $\etbt$. Under these preferences, among all tune-before-test levels that induce a follower equilibrium preserving the capability ordering, the designer prefers the cheapest one. The leader’s optimal choice then corresponds to the \emph{Stackelberg--Nash equilibrium} defined below. 

\subsection{Stackelberg-Nash Equilibrium}
Formally, a pair $({\etbt}^\star, \textbf{
e}^\star)$ is a \emph{Stackelberg--Nash equilibrium} if: (i) $\textbf{e}^\star$ is a pure-strategy Nash equilibrium of the induced follower game under ${\etbt}^\star$, and (ii) ${\etbt}^\star$ maximizes the designer’s utility anticipating this response.

We begin by fixing $\etbt$ and analyzing the induced follower game. Understanding the existence of follower equilibria and the rankings they generate is a prerequisite for determining the designer’s optimal policy. In \Cref{sec:tbt}, we then study how to choose the optimal TbT level ${\etbt}^\star$
 to maximize the leaderboard designer’s utility.

%% file: Sections/equilibrium.tex
\section{Equilibrium Analysis of the Follower Game}
\label{sec:equilibrium}
What incentives does a rank-based leaderboard create for the strategic model developers? 
When model developers can invest in additional benchmark-specific post-training to improve performance, a central question is whether the induced competition admits a pure-strategy Nash equilibrium, that is, whether incentives can settle at a stable effort profile.

Perhaps surprisingly, our first result is partially optimistic: whenever a pure-strategy equilibrium of the induced follower subgame exists, the resulting leaderboard ranking must preserve the latent capability ordering (\Cref{prop:order-preserve}). In this case, any strategic post-training should still lead to a stable and correct ranking.
However, as we point out later, the more subtle issue is that such equilibria need not exist. We show that equilibrium existence depends critically on the \emph{reward gaps} between adjacent ranks (\Cref{thm:no-equilibrium}). When rewards are sufficiently flat, developers face persistent incentives to “just overtake” nearby competitors, leading to arms-race dynamics. %rather than settling down.

\subsection{Assumptions and Empirical Verifications}
To analyze equilibrium behavior in the induced follower game, we need to make some structural assumptions on (i) the cost of benchmark-specific post-training, and (ii) how benchmark performance depends on a model’s capability and post-training effort.

Initial benchmark-specific data curation may be inexpensive, but getting more data typically becomes progressively more costly 
\citep{datologyai2024textcuration}. To capture this, we assume that the cost of effort is non-decreasing and convex: 
\begin{assumption}[Cost Function $c$]
\label{ass:cost}
The cost of benchmark-specific post-training $c:\mathbb{R}_{\geq 0}\to\mathbb{R}_{\geq 0}$ is non-decreasing and convex, with $c(0)=0$ and $\lim_{e\to\infty}c(e)=\infty$.
\end{assumption}

% \paragraph{Post-effort score function.}
Next, we impose structure on the post-effort score function: %$v(\theta,e)$:
\begin{figure*}
    \centering
    \includegraphics[width=0.45\linewidth]{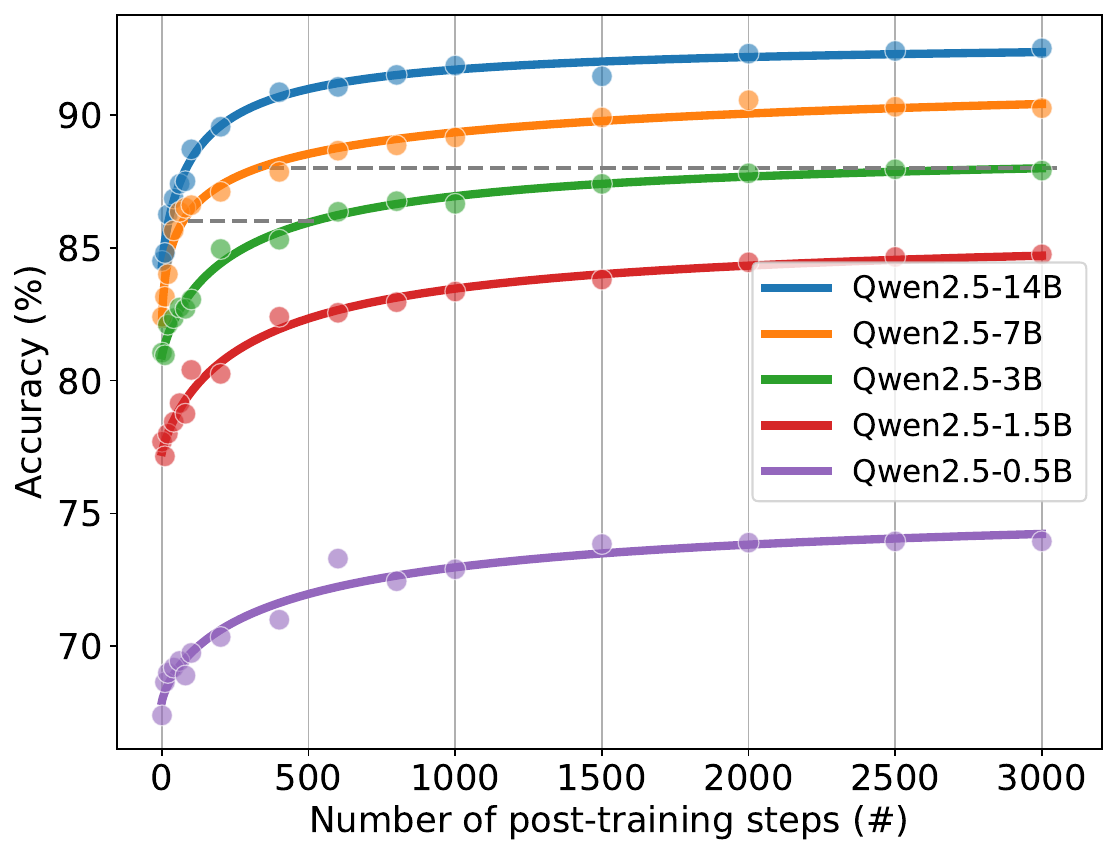}
    \includegraphics[width=0.45\linewidth]{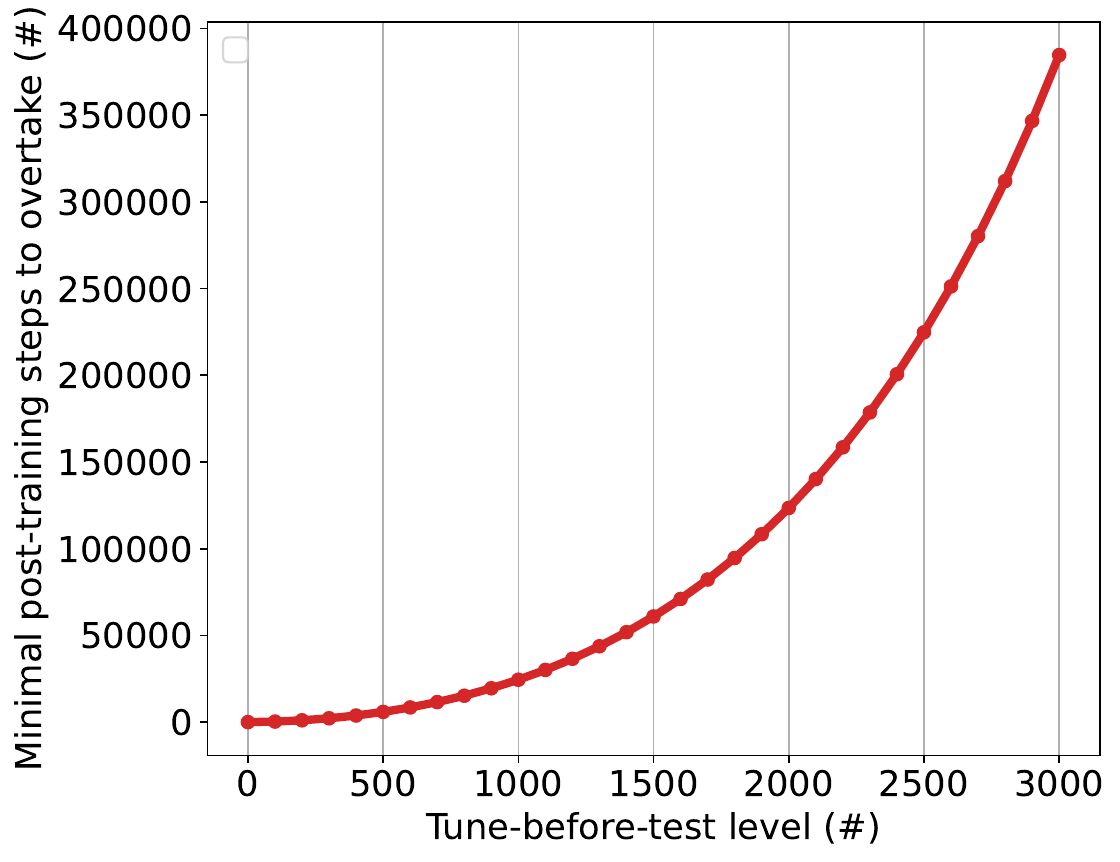}
    \caption{
    \textbf{Left:}
    Continued post-training trajectories of \textit{Qwen2.5} models of different sizes on \texttt{Winogrande}. Here, we use model size as a proxy for the model’s latent capability $\theta$. 
    The $x$-axis denotes the amount of post-training steps (each step corresponds to 8 data points), reflecting post-training effort $e$. 
    The $y$-axis denotes accuracy on the validation set, i.e., $v(\theta, e)$. 
    For each model, we fit a curve following \Cref{eq:log-score}. 
    The empirical results align with the assumptions of \emph{monotonicity in capability}, \emph{diminishing returns and saturation in effort}, and \emph{non-decreasing effort gaps} in Assumption~\ref{ass:post-effort-score}.
    See Appendix~\ref{app:empirical} for additional details and results for the other eight benchmarks. 
    \textbf{Right:} For each tune-before-test level $\etbt$ (the amount of benchmark-specific finetuning steps, $x$-axis), we calculate the minimal additional steps required ($y$-axis) to change the ranking for at least one model, i.e., $\min_{r\in \{2,\ldots,n\}} {e}^\req_r(\etbt)$, based on the fitted curves on the left.
    With $\etbt=3,000$, at least 384,668 training steps are needed to change the ranking of one model.
    }
    \label{fig:traj}
\end{figure*}

\begin{assumption}[Post-effort score function $v$]
\label{ass:post-effort-score}
Let $v:\Theta\times\mathbb{R}_{\geq 0}\to[0,1]$ denote the post-effort score function, where
$\theta\in\Theta\subseteq\mathbb{R}_{\geq 0}$ is the capability and $e\in\mathbb{R}_{\geq 0}$ is the total effort.
Assume $v$ is continuous and that for all $(\theta,e)$:
\begin{enumerate}
    \item[C1.] ({Monotonicity in capability})
    Holding effort fixed, higher capability yields a higher score:
    $\partial_\theta v(\theta,e) > 0$.
    \item[C2.] ({Diminishing returns and saturation in effort})
    Effort weakly improves performance but with diminishing marginal returns, and scores converge to a finite limit:
    $\partial_e v(\theta,e)\ge 0$, $\partial_{ee} v(\theta,e)\le 0$, and
    $v^\infty(\theta):=\lim_{e\to\infty} v(\theta,e)$ exists and is finite.
    \item[C3.] (Non-decreasing effort gaps)
    Let $e^\req(s;\theta):=\inf\{e\geq 0:\ v(\theta,e)\ge s\}$ denote the minimal effort required for capability $\theta$
    to reach target score $s\in[0,1]$.
    For any $\theta'>\theta$, the effort advantage of higher capability, $ e^\req(s;\theta)- e^\req(s;\theta')$, is (weakly) nondecreasing in $s$.
\end{enumerate}
\end{assumption}

These conditions capture three widely observed regularities in post-training scaling behavior: higher-capability models perform better at any fixed effort (C1), benchmark-specific training exhibits diminishing returns and saturation as models approach the benchmark’s performance ceiling (C2).
C3 further requires that as the target score increases, it does not become easier for a lower-capability model to close the gap through post-training alone. 
This is a standard single-crossing–type regularity condition in economic theory \citep{topkis1998supermodularity}, and is consistent with empirical observations that stronger foundation models tend to make more effective use of additional training, especially at a higher target score \citep{wei2021finetuned}. 

\paragraph{Example: generalized power-law scaling.}

As an illustrative functional form consistent with Assumption~\ref{ass:post-effort-score}, consider a generalized scaling law motivated by empirical studies of post-training behavior \citep{ruan2024observational,finnveden2020extrapolating,owen2024predictable}. Let
\[
\tilde v(\theta,e)
:= \frac{v(\theta,e)-L(\theta)}{U(\theta)-L(\theta)} \in [0,1]
\]
denote a normalized score, where $L(\theta)$ and $U(\theta)$ represent model-specific lower and upper performance levels. 
Suppose
\begin{align}
\label{eq:log-score}
    \sigma^{-1}\!\bigl(\tilde v(\theta,e)\bigr)
= \alpha(\theta) + \beta(\theta)\log(1+e),
\end{align}

where $\sigma^{-1}$ is the logit link function, $\alpha(\theta)$ is the baseline performance on the logit scale, and $\beta(\theta) > 0$ is the scaling coefficient governing how efficiently extra compute improves performance.  
If $\alpha(\theta)$ and $\beta(\theta)$ are both weakly increasing in $\theta$, this specification satisfies all conditions in \Cref{ass:post-effort-score} (C1–C3).

We also provide empirical evidence for Assumption~\ref{ass:post-effort-score} using controlled post-training experiments within a single model family. Because latent capability $\theta$ and prior benchmark-specific post-training effort are unobserved, we restrict attention to \textit{Qwen2.5} models~\citep{Yang2024Qwen25TR}. 
Within this family, we use model size as a proxy for capability $\theta$, and we treat the models' pre-existing post-training as approximately the same across sizes. 
We then apply additional benchmark-specific post-training to each model on the benchmark's training set, plotting performance on the validation set as a function of added effort $e$ (measured by the incremental number of training steps).

Figure~\ref{fig:traj} (left) shows the results on \texttt{Winogrande}, a large-scale commonsense pronoun resolution benchmark introduced by \citeauthor{winogrande}, while results for the other eight benchmarks are in \Cref{app:empirical}.
For each model size, we fit the generalized power-law specification in \Cref{eq:log-score}.
The fitted curves track the observed points closely and are consistent with \Cref{ass:post-effort-score}. %\guanhua{How to refer to these?}
First, at any fixed $e$, larger models achieve better performance (monotonicity in $\theta$, C1).
Second, gains from additional effort diminish and scores approach a plateau (concavity and saturation in $e$, C2).
Third, the horizontal distance between curves---interpretable as the extra effort required for a smaller model to match a larger model’s target score---does not shrink at higher target accuracies (increasing effort gaps, C3). 
For instance, the implied effort gap between Qwen2.5-3B and Qwen2.5-7B is larger at \(88\%\) accuracy than at \(86\%\), as illustrated by the dashed guides in Figure~\ref{fig:traj} (left).

\subsection{Order Properties of Equilibrium Profiles}
Our first result shows that if the induced follower game admits a pure-strategy equilibrium $\textbf{e}^*$, the resulting leaderboard ordering must respect the latent capability ordering (\Cref{prop:order-preserve}). Intuitively, this means that if all developers choose benchmark-specific post-training optimally, strategic fine-tuning alone cannot cause a lower-capability model to strictly outrank a higher-capability one:

\begin{proposition}
\label{prop:order-preserve}
Under \Cref{ass:cost} and \Cref{ass:post-effort-score}, fix any tune-before-test adjustment level $\etbt\geq 0$.
If the follower game admits a pure-Nash equilibrium $\textbf{e}^*$, then for any $i,j$, 
$$\theta_i>\theta_j \;\Rightarrow\; v(\theta_i,\etbt+e_i^*) \ge v(\theta_j,\etbt+e_j^*).$$
In particular, post-effort scores at equilibrium preserve the latent capability ordering up to ties.
\end{proposition}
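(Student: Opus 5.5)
The plan is to argue by contradiction using a \emph{copy-the-competitor} deviation. Suppose $\textbf{e}^*$ is a pure-strategy Nash equilibrium and there are $i,j$ with $\theta_i>\theta_j$ but $v_i^*:=v(\theta_i,\etbt+e_i^*)<v_j^*:=v(\theta_j,\etbt+e_j^*)$. Write $r_i,r_j$ for the equilibrium ranks. Because $v_j^*>v_i^*$, model $j$ is counted in $i$'s rank but not conversely, so $r_j<r_i$.

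\textbf{Step 1: $j$ must be exerting positive effort.} If $e_j^*=0$, then C1 and the monotonicity in C2 give
\[
v_j^*=v(\theta_j,\etbt)<v(\theta_i,\etbt)\le v(\theta_i,\etbt+e_i^*)=v_i^*,
\]
which is a contradiction. Hence $e_j^*>0$, and more precisely $\etbt+e_j^*>e^\req(v_j^*;\theta_j)\ge e^\req(v_j^*;\theta_i)$, with the last inequality by C1.

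\textbf{Step 2: a cheap deviation for $i$.} By C1, $v(\theta_i,\etbt+e_j^*)>v(\theta_j,\etbt+e_j^*)=v_j^*$. By continuity of $v$, there is then an effort $e'<e_j^*$ with $v(\theta_i,\etbt+e')>v_j^*$. Under the deviation $e'$, the rank of $i$ is
\[
1+\#\{k\neq i: v_k^*>v(\theta_i,\etbt+e')\}\le 1+\#\{k\ne i,j: v_k^*>v_j^*\}\le r_j,
\]
because $v_i^*<v_j^*$ was not counted in $r_j$ and $j$ is now strictly below $i$. Ties cannot hurt here since the inequality is strict. Since $c$ is nondecreasing, the equilibrium condition for $i$ yields
\[
R_{r_i}-c(e_i^*)\;\ge\;R_{r_j}-c(e')\;\ge\;R_{r_i}-c(e_j^*).
\]
This forces $c(e_i^*)\le c(e_j^*)$ together with $R_{r_j}-R_{r_i}\le c(e_j^*)-c(e')$.

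\textbf{Step 3: use $j$'s equilibrium condition to obtain strict inequality.} I would consider $j$ deviating downward, from $e_j^*$ to the smallest effort that keeps $j$ strictly above $v_i^*$, or to $e_i^*$. Using C3 (effort gaps nondecreasing in the target score), the effort $j$ spends beyond what $i$ needs to reach the same score is at least the gap at the lower score $v_i^*$. Combining this with convexity of $c$ (increasing increments), $j$ pays strictly more to hold rank $r_j$ than $i$ would. Chaining $j$'s no-deviation inequality with $i$'s from Step~2 should then produce $R_{r_j}-R_{r_i}>R_{r_j}-R_{r_i}$, a contradiction.

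The main obstacle is exactly this strictness. The cost $c$ may be flat near $0$ and the rewards $R$ may be flat, so Steps 1--2 by themselves only yield weak inequalities. I expect to need a careful case split. If $c(e_j^*)>c(e')$ strictly, Step~2 alone already gives a strict gain for $i$. If instead $c$ is constant on $[e',e_j^*]$, the deviation is free, so $c$ must vanish there. In that case $i$ can costlessly reach any score up to $v(\theta_i,\etbt+e_j^*)$, and that score is strictly above $v_j^*$. Then $i$ weakly prefers rank $\le r_j$; if $R_{r_j}=R_{r_i}$, the swap is payoff-irrelevant and I would check whether the intended claim needs $i$ to have strictly positive incentive. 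If needed, I would argue via $j$'s option to cut effort to exploit the reward flatness.
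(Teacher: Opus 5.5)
\textbf{Verdict: there is a genuine gap, at Step~3, the step that is supposed to produce the contradiction.}

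Steps~1--2 are essentially sound; in particular the bound \emph{rank} $\le r_j$ for $i$'s deviation is correct. But they only give $c(e_i^*)\le c(e_j^*)$ and $R_{r_j}-R_{r_i}\le c(e')-c(e_i^*)$. Your second consequence, $R_{r_j}-R_{r_i}\le c(e_j^*)-c(e')$, does not follow.

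The same slip breaks your case split. If $c(e_j^*)>c(e')$, model $i$ does \emph{not} get a strict gain. Its deviation payoff $R_{r_j}-c(e')$ has to beat $R_{r_i}-c(e_i^*)$, not $R_{r_i}-c(e_j^*)$. With $e_i^*=0$ and $c(e')>R_{r_j}-R_{r_i}$, the deviation is unprofitable even though $c(e_j^*)>c(e')$.

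The comparison that matters is the one you defer to Step~3, and it concerns increments. Let $\tilde e_i$ be the least additional effort for $i$ to reach $v_j^*$, and $\tilde e_j$ the least for $j$ to reach $v_i^*$. Then C1, C3 and convexity of $c$ give
\[
c(\tilde e_i)-c(e_i^*)\;\le\;c(e_j^*)-c(\tilde e_j),
\]
but only weakly. C3 is a weak monotonicity, and $c$ need not be strictly convex. For instance, with $v(\theta,e)=f(\theta+e)$ and $c(e)=\kappa e$ the two sides coincide. So the claim that $j$ pays strictly more than $i$ would is unjustified. Chaining with the Nash inequalities for $i$ climbing and $j$ descending yields only $R_{r_i-1}\le R_{r_i}$, which is no contradiction when rewards are merely non-increasing.

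\textbf{The missing strictness must come from the rewards, and without it the statement is false.} So no refinement of your case analysis can close the gap under the stated hypotheses. Here is a counterexample:
\begin{itemize}
\item $n=2$ and $\etbt=0$;
\item $v(\theta,e)=1-e^{-(\theta+e)}$, which is continuous and satisfies C1--C3;
\item $c(e)=\max(0,e-1)$, which satisfies the cost assumption;
\item $R_1=R_2$, $\theta_1=1$, $\theta_2=1/2$, and $\mathbf{e}^*=(0,1)$.
\end{itemize}
Each developer gets $R_1$ at zero cost, the maximal possible utility, so $\mathbf{e}^*$ is a PNE. Yet $v_2^*=1-e^{-3/2}>1-e^{-1}=v_1^*$. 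This is exactly your unresolved flat-cost, flat-reward case. Cutting $j$'s effort cannot help, because that effort is free.

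The paper's proof runs the same swap argument as your Step~3: counterfactual efforts, C3 for the effort gaps, convexity for the cost gaps, and Nash inequalities for both players. It reaches the same equality and breaks it only by appealing to rewards that are strictly decreasing in rank, which is an extra hypothesis not in the statement.

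\textbf{How to close Step~3 under strictly decreasing rewards.}
\begin{itemize}
\item Model $i$ moving just above $v_j^*$ has rank at most $r_j$. Hence $R_{r_j}-R_{r_i}\le c(\tilde e_i)-c(e_i^*)$.
\item Model $j$ moving just above $v_i^*$ has rank at most $r_i-1$. Hence $c(e_j^*)-c(\tilde e_j)\le R_{r_j}-R_{r_i-1}$, after letting the overshoot vanish. This limit uses continuity of $c$ and the fact that a concave nondecreasing $v(\theta,\cdot)$ cannot be flat at a level it later exceeds.
\item Chaining these with the cost inequality above gives $R_{r_i-1}\le R_{r_i}$, contradicting $R_{r_i-1}>R_{r_i}$.
\end{itemize}
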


\begin{proof}[Proof sketch]
Fix any $\etbt\ge 0$ and suppose for contradiction that there exist $i,j$ with $\theta_i>\theta_j$ but
$v(\theta_i,\etbt+e^*_i)<v(\theta_j,\etbt+e^*_j)$ at a pure-strategy Nash equilibrium $\textbf{e}^*$.
Let $\tilde e_i$ (resp.\ $\tilde e_j$) be the minimal additional effort for agent $i$ (resp.\ $j$) to reach the other
agent’s equilibrium score.
By the single-crossing condition (C3), the higher-capability agent’s effort advantage is larger at higher target scores,
which implies $\tilde e_i-e_i^*\le e_j^*-\tilde e_j$. By convexity of $c$, this yields
$c(\tilde e_i)-c(e_i^*)\le c(e_j^*)-c(\tilde e_j)$. On the other hand, Nash optimality implies that any profitable rank improvement must be outweighed by the corresponding
increase in cost. Since $v_i^*<v_j^*$, agent $i$ can strictly overtake $j$ by increasing its score to $v_j^*+\varepsilon$
for arbitrarily small $\varepsilon>0$, which yields a strict rank improvement under any tie-breaking rule. By monotonicity
of rank-based rewards, the reward gain from overtaking $j$ is at least the reward loss suffered by $j$, implying the
reverse inequality $c(\tilde e_i)-c(e_i^*)\ge c(e_j^*)-c(\tilde e_j)$.
This means $c(\tilde e_i)-c(e_i^*) =  c(e_j^*)-c(\tilde e_j)$, a contradiction to $v(\theta_i,\etbt+e_i^*)< v(\theta_j,\etbt+e_j^*)$. Thus we have $v(\theta_i,\etbt+e_i^*)\ge v(\theta_j,\etbt+e_j^*)$.

\end{proof}

\Cref{prop:order-preserve} offers reassuring news: if the leaderboard competition ever settles, it will settle in the \emph{right} order. However, this positive result comes with an important caveat. 
As we show next, such equilibria need not exist at all. 
In particular, if the reward gap between some adjacent ranks is large relative to the cost of overtaking, namely, if there exists $r\in \{2, \cdots, n\}$ with $R_{r-1} -R_r$ sufficiently large, the induced follower game fails to admit any pure-strategy Nash equilibrium. 
In this regime, developers face persistent incentives to “just overtake” nearby competitors, leading to arms-race dynamics rather than convergence.

\subsection{Existence of Pure-Nash Equilibrium}

Fixing a TbT $\etbt$, we analyze the induced follower game among developers and characterize when it admits a pure-strategy Nash equilibrium, and when it does not. %Throughout, w.l.o.g., we index models by decreasing capability, $\theta_1 > \theta_2 > \cdots > \theta_n$.

\paragraph{All-Zero additional effort profile.}
A natural reference point is the all-zero additional-effort profile $\textbf{e}= (0, \cdots, 0)$, where model $i$ attains the baseline post-TbT score
\[
s_i(\etbt)=v(\theta_i,\etbt).
\]
By monotonicity in capability (C1), we have $s_1(\etbt) > \dots > s_n(\etbt)$.
Because rewards depend only on rank and effort costs are nondecreasing, any profitable deviation must be an \emph{overtaking move}, and the cheapest such deviation is to overtake the adjacent competitor directly above:

\begin{definition}[Just-Overtake Effort at TbT Level $\etbt$]
\label{def:overtake-effort}
Fix a TbT level $\etbt\ge 0$.
For any rank $r \in \{2, \cdots, n\}$, define the \emph{``just-overtake ''effort} as
\begin{align*}
    {e}_r^\req (\etbt):=
\inf\Bigl\{e\ge 0:\ v(\theta_r,\etbt+e)>s_{r-1}(\etbt)\Bigr\}.
\end{align*}
\end{definition}

% Intuitively, $\hat{e}_r(\etbt)$ is the smallest amount of \emph{additional} benchmark-specific training that agent $r$
% must exert to beat the baseline score of the agent immediately above it.
Equivalently, ${e}_r^\req(\etbt)$ measures how difficult it is for the model designer $r$ to ``climb'' from rank $r$ to rank $r-1$ starting from any common TbT baseline.
When $e^\req_r(\etbt)$ is small, even minor fine-tuning can change rank, creating strong overtaking incentives. When it is large, small investments are unlikely to affect the leaderboard ordering. The following proposition precisely states when the all-zero effort profile is a pure Nash equilibrium strategy of the induced follower game:

% \begin{proposition}
% \label{prop:zero-eq-tbt}
% Fix any $\etbt\ge 0$. The all-zero profile $e^*=(0,\dots,0)$ is a PNE if and only if for every rank $r\in \{2, \cdots n\}$,
% \begin{align}
% \label{eq:zero-eq-condition}
%    c \Big({e}^\req_r\big({\etbt}\big)\Big)\ \ge\ R_{r-1}-R_{r}. 
% \end{align}

% In words, with TbT level $\etbt$, no agent can profitably `just overtake'' the model immediately above it: the minimal cost required to surpass the adjacent baseline score exceeds the corresponding reward gap.
% \end{proposition}

\begin{proposition}[Zero-effort equilibrium condition]
\label{prop:zero-eq-tbt}
Fix any ${\etbt}\ge 0$. The all-zero profile $\textbf{e}= (0, \cdots, 0)$ is a PNE if and only if, for every
$r\in\{2,\ldots,n\}$,
\begin{equation}
\label{eq:zero-eq-condition}
c\!\left(e_r^{\req}({\etbt})\right)\ \ge\ R_{r-1}-R_r.
\end{equation}
In words, with TbT level $\etbt$, no agent can profitably ``just overtake'' the model immediately above it: the minimal cost required to surpass the adjacent baseline score exceeds the corresponding reward gap.
\end{proposition}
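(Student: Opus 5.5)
The plan is to prove the two directions separately. Throughout I write $\tau:=\etbt$ and $s_k:=s_k(\tau)$, and I use two elementary facts about $c$ from \Cref{ass:cost}. First, convexity together with $c(0)=0$ gives $c(\lambda e)\le \lambda c(e)$ for $\lambda\in[0,1]$. Hence $c$ is continuous on $[0,\infty)$, including right-continuity at $0$, and $c$ is superadditive: $c(a+b)\ge c(a)+c(b)$ for $a,b\ge 0$. I also adopt the convention that $c(e_r^{\req})=\infty$ when the defining set in \Cref{def:overtake-effort} is empty. This case can occur through saturation (C2), and then \eqref{eq:zero-eq-condition} holds trivially for that $r$.

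\emph{Necessity.} Suppose that for some $r$ we have $c(e_r^{\req}(\tau))<R_{r-1}-R_r$, so in particular $e_r^{\req}$ is finite. By the definition of the infimum and continuity of $c$, there is some $e>e_r^{\req}$ with $v(\theta_r,\tau+e)>s_{r-1}$ and $c(e)<R_{r-1}-R_r$. If agent $r$ deviates to $e$ while all others stay at zero, it strictly beats $s_{r-1},\dots,s_n$. Its rank therefore becomes at most $r-1$, and its reward is at least $R_{r-1}$. The deviation gain is at least $R_{r-1}-R_r-c(e)>0$, so the all-zero profile is not a PNE.

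\emph{Sufficiency.} Assume \eqref{eq:zero-eq-condition} holds for all $r$, and consider a deviation $e>0$ by agent $r$. Moving down never helps, because rewards are non-increasing in rank and costs are nonnegative. So suppose the new rank is some $k<r$. This requires agent $r$ to (at least weakly) reach the score $s_k$. I would first reduce to the strict-inequality requirement of \Cref{def:overtake-effort}, handling ties via the rank formula $1+\sum_j\mathbbm{I}\{v_j>v_i\}$; this boundary case needs care if $v(\theta_r,\cdot)$ is flat exactly at level $s_k$.

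The core step is a telescoping bound. I will show that the effort agent $r$ needs to reach $s_k$ is at least $\sum_{m=k+1}^{r} e_m^{\req}(\tau)$. Working with $e^{\req}(s;\theta)$ from C3, the additional effort is $e^{\req}(s_k;\theta_r)-\tau$. Write this as a sum of consecutive differences
\[
\bigl[e^{\req}(s_k;\theta_{m})-e^{\req}(s_k;\theta_{m-1})\bigr],\qquad m=k+1,\dots,r,
\]
plus the term $e^{\req}(s_k;\theta_k)-\tau$, which is $\approx 0$ since $s_k=v(\theta_k,\tau)$. For each $m$, apply C3 with $\theta'=\theta_{m-1}>\theta_m$ and the lower target $s_{m-1}\le s_k$. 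The $m$-th difference is then at least $e^{\req}(s_{m-1};\theta_m)-e^{\req}(s_{m-1};\theta_{m-1})$, which equals $e_m^{\req}(\tau)$ up to the $\tau$ offset. Superadditivity of $c$ then gives
\[
c(e)\ \ge\ \sum_{m=k+1}^r c\bigl(e_m^{\req}\bigr)\ \ge\ \sum_{m=k+1}^r (R_{m-1}-R_m)\ =\ R_k-R_r .
\]
So no upward deviation is profitable.

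I expect the main obstacle to be this chaining step. It requires carefully matching the weak-inequality infimum $e^{\req}(s;\theta)$ of C3 with the strict-inequality infimum in \Cref{def:overtake-effort}, and checking that $e^{\req}(s_{m-1};\theta_{m-1})=\tau$ exactly (by C1, C2 and continuity). Using continuity of $v$ and monotonicity in $e$, the two infima coincide except on flat pieces. I would handle the flat case either by noting that C1 with C2 and concavity rules out flatness below saturation, or by a limiting $\varepsilon$-argument.
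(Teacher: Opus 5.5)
Your argument is correct, and in the sufficiency direction it takes a different and more complete route than the paper. The paper writes the no-deviation condition $c\bigl(e^{\req}(s_k(\etbt);\theta_r,\etbt)\bigr)\ge R_k-R_r$ for all $k<r$. It then reduces to $k=r-1$ by observing that overtaking a higher-ranked model needs weakly more effort than overtaking the adjacent one. That observation only compares costs. A jump from rank $r$ to rank $k$ also earns the larger gain $R_k-R_r$, so it does not by itself justify checking only adjacent pairs; indeed the paper's argument never invokes convexity of $c$ or C3.

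Your telescoping bound $e\ge\sum_{m=k+1}^{r}e_m^{\req}(\etbt)$ from C3, combined with superadditivity of a convex $c$ with $c(0)=0$, is exactly what makes the adjacent conditions add up to $R_k-R_r$. It also shows both assumptions are genuinely needed. Take $v(\theta,e)=1-\exp(-(e+\theta))$, $(\theta_1,\theta_2,\theta_3)=(2,1,0)$, $(R_1,R_2,R_3)=(2,1,0)$ and the concave cost $c(e)=\sqrt{e}$. Both adjacent conditions hold with equality, since $e_2^{\req}=e_3^{\req}=1$. Yet model $3$ overtakes model $1$ with effort just above $2$, at cost about $\sqrt{2}<2=R_1-R_3$.

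Your necessity direction is also the careful version. The strict infimum $e_r^{\req}(\etbt)$ is not attained, because at that effort the score equals $s_{r-1}(\etbt)$. So choosing $e>e_r^{\req}(\etbt)$ and using right-continuity of $c$ is needed. The paper instead treats $R_k-c(e^{\req})$ as an attained deviation payoff.

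The loose ends you flag close with one observation. By C2, a flat piece of $v(\theta,\cdot)$ can only sit at its saturation level. If model $r$ must strictly beat $s_k(\etbt)$, then $v^\infty(\theta_m)\ge v^\infty(\theta_r)>s_k(\etbt)$ for all $k\le m\le r$, so none of these curves is flat at or below $s_k(\etbt)$. This gives both $e^{\req}(s_{m-1}(\etbt);\theta_{m-1})=\etbt$ and equality of the weak and strict adjacent infima, with no $\varepsilon$-argument needed.

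Under the literal rank formula $1+\sum_{j}\mathbbm{I}\{v_j>v_i\}$, however, tied models share the better rank. Suppose the curve of model $r$ saturates at finite effort exactly at $s_{r-1}(\etbt)$. Then model $r$ gains a rank by tying at finite cost while $e_r^{\req}(\etbt)=\infty$, so the ``if'' direction genuinely fails. You must exclude this case by fixing the tie convention (ties against the lower-capability model, as in \Cref{thm:no-equilibrium}), not by a limiting argument.
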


\Cref{prop:zero-eq-tbt} further implies that if the adjacent ``just-overtake'' condition fails for some model designer, then no pure-strategy equilibrium exists:

\begin{theorem}[Equilibrium structure and nonexistence]
\label{thm:no-equilibrium}
Fix any $\etbt\ge 0$, and suppose ties are broken deterministically in a way that does not favor the lower-capability model \footnote{Under random tie-breaking, a lower-capability model may benefit from exerting positive effort to enter a tie lottery, so the conclusion that every PNE is all-zero need not hold without further qualification. This is mainly a technical issue: exact ties are rare in practice because leaderboard scores are usually reported at high precision or resolved by fixed secondary rules.}. Then any PNE, if it exists, must be the all-zero profile $\mathbf{e}= (0, \cdots, 0)$.
Moreover, if there exists $r\in\{2,\ldots,n\}$ such that
\begin{equation}
\label{eq:zero-eq-not-hold-condition}
c\!\left(e_r^{\req}(\etbt)\right)<R_{r-1}-R_r,
\end{equation}
then the induced follower game admits no PNE.
\end{theorem}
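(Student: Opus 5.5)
The proof has two parts. First we show that every PNE is the all-zero profile. The second part, nonexistence, then follows directly from \Cref{prop:zero-eq-tbt}. If some $r$ satisfies \eqref{eq:zero-eq-not-hold-condition}, the all-zero profile is not a PNE by that proposition. By the first part, no other profile can be a PNE either.

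For the first part, let $\mathbf{e}^*$ be a PNE and suppose some developer $i$ has $e_i^*>0$. By \Cref{prop:order-preserve}, equilibrium scores are weakly ordered by capability. Ties between different capabilities are broken so as not to favor the lower-capability model. Hence the realized ranking coincides with the capability ordering, and developer $i$ occupies rank $i$.

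We then consider the deviation $e_i=0$. Developer $i$'s score drops to $s_i(\etbt)=v(\theta_i,\etbt)$. We claim that developer $i$ still keeps rank $i$, i.e.\ no $j>i$ overtakes it. For $j>i$ we have $v(\theta_j,\etbt+e_j^*)\le v(\theta_i,\etbt+e_i^*)$, so $j$ is ranked below $i$. If $e_j^*=0$, then C1 gives $s_j<s_i$, and $j$ stays below. The case $e_j^*>0$ is handled by an induction from the bottom rank upward.
\begin{itemize}
\item Developer $n$ occupies the last rank, and rewards are nonincreasing. Any positive effort therefore at best preserves rank $n$ while strictly paying cost, unless $c(e_n^*)=0$. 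Since $c$ is convex, nondecreasing and unbounded with $c(0)=0$, cost zero is possible only on an initial flat segment.
\item To avoid the flat-segment issue, we argue via strict improvement. A developer with $e_j^*>0$ must strictly benefit relative to $e_j=0$, or at least weakly. If it only weakly benefits and costs are flat, we would instead need to use a minimal-effort refinement. I would first try to show that on the flat segment $c=0$, one can still conclude using the tie-breaking convention. Failing that, I would note that the paper's intended reading treats $c$ as strictly positive for $e>0$.
\item Assuming $c(e)>0$ for $e>0$, induct upward. Suppose all developers $j>i$ have zero effort. Then developer $i$'s baseline score $s_i$ already exceeds all lower scores by C1. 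Its effort does not raise it above any $j<i$, since the capability ordering is preserved at equilibrium. Developer $i$'s rank therefore equals $i$ with or without effort. Setting $e_i=0$ keeps reward $R_i$ and saves $c(e_i^*)>0$, which contradicts optimality. This forces $e_i^*=0$, completing the induction and yielding $\mathbf{e}^*=\mathbf{0}$.
\end{itemize}

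The main obstacle is the step above where effort by developer $i$ cannot lift it above higher-capability developers. This needs \Cref{prop:order-preserve} together with the tie-breaking rule. An equal score with a higher-capability model must resolve against $i$, so $i$'s effort buys no rank.

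A secondary obstacle is the degenerate case where $c$ vanishes on an interval. In that case ``all-zero'' should be read up to payoff-irrelevant effort, or excluded by assuming $c(e)>0$ for $e>0$. I would state this explicitly when writing the proof.
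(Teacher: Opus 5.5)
Your proposal is correct and follows essentially the paper's route. \Cref{prop:order-preserve} together with the tie-breaking rule places developer $i$ at rank $i$, a bottom-up induction using the zero-effort deviation forces $\mathbf{e}^*=\mathbf{0}$, and nonexistence follows because the just-overtake deviation (the content of \Cref{prop:zero-eq-tbt}) breaks the all-zero profile. Your flat-cost caveat is a genuine gap in the paper's own step ``model $n$ weakly improves, hence $e_n^*=0$'', and tie-breaking cannot rescue it. For example, with $c(e)=\max(0,e-1)$ a bottom model can add free, rank-neutral effort, so the all-zero claim fails, and the nonexistence claim can fail too. You should therefore simply assume $c(e)>0$ for $e>0$; under convexity this makes $c$ strictly increasing, and then your induction goes through.
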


\begin{proof}[Proof sketch]
Suppose a PNE $\mathbf{e}^*$ exists. By \Cref{prop:order-preserve}, post-effort scores are weakly ordered by capability at equilibrium. Since ties are broken deterministically against the lower-capability model, model $n$ remains last-ranked whenever it only ties a higher-capability model. Thus, it cannot gain from positive effort that does not strictly overtake model $n-1$, and it weakly improves by setting $e_n=0$. Hence $e_n^*=0$. Given $e_n^*=0$, model $n-1$ strictly outranks model $n$ at zero effort by (C1) and cannot lose rank by reducing effort, so $e_{n-1}^*=0$. Repeating this argument inductively yields $e_i^*=0$ for all $i$. Therefore, any PNE must be the all-zero profile $\mathbf{e}= (0, \cdots, 0)$.

If $c(e_r^{\req}(\etbt))<R_{r-1}-R_r$ for some $r\in\{2,\ldots,n\}$, then at $\mathbf{e}=(0, \cdots, 0)$ model $r$ can profitably deviate by exerting effort $e_r^{\req}(\etbt)$, thereby overtaking model $r-1$ and increasing its reward by $R_{r-1}-R_r$ at lower cost. Hence $\mathbf{e}=(0, \cdots, 0)$ is not a PNE. Since any PNE must be all-zero, no PNE exists.
\end{proof}

The theorem shows that incentives are governed locally by the adjacent reward gap $R_{r-1} - R_r$: moving from rank $r$ to rank $r-1$ is profitable only if this gain is large enough relative to the corresponding overtaking cost. This gives a simple interpretation of common reward schemes. In winner-take-all schemes, incentives are concentrated at the top rank; in top-k schemes, they are concentrated near the cutoff into the rewarded set. By contrast, under a smoothly decaying reward scheme, adjacent reward gaps are spread more evenly across ranks and may all be small. In that case, no single rank improvement creates a large reward jump, so overtaking incentives are correspondingly weaker throughout the leaderboard.

Our analysis also suggests that the main failure mode of leaderboards is not that they converge to a \emph{stable but incorrect}
ranking. 
Instead, leaderboards can fail because they may incentivize \emph{continuous competitive fine-tuning}: developers are motivated to repeatedly invest in benchmark-specific optimization simply to gain (or defend) a small rank advantage (\Cref{eq:zero-eq-not-hold-condition}).

This regime is particularly relevant when benchmarks provide limited separation between models, so that even small
performance gains translate into meaningful rank changes.
Such situations can arise on \emph{saturated} benchmarks where frontier models perform similarly (e.g., MMLU~\citep{hendrycks2020measuring} or HellaSwag~\citep{zellers2019hellaswag}), as well as on \emph{very difficult or
early-stage} benchmarks where all models perform poorly and remain tightly clustered (e.g., Humanity's Last Exam~\citep{phan2025humanity}).

%% file: Sections/tune-before-test.tex
\section{Tune-before-Test Aligns Benchmark Incentives}
\label{sec:tbt}
We now turn to the leaderboard designer's perspective and ask: if small reward gaps can preclude equilibrium in the induced follower game, how should the designer choose $\etbt$ to stabilize incentives? At a high level, tune-before-test shifts the leaderboard's ``operating point'' by applying the same amount of benchmark-specific training to all models, so the baseline scores become
$s_i({\etbt}) = v(\theta_i,\etbt)$. This has two conceptually distinct effects.

By moving models closer to their benchmark-specific performance limits, TbT \emph{directly} reduces the leaderboard's sensitivity to who prepared more aggressively from scratch. In the limit as models approach their saturation levels $v^\infty(\theta)=\lim_{e\to\infty}v(\theta,e)$, rankings depend only on capability. Under monotonicity in capability (C1), saturation yields a capability-consistent ordering:

\begin{proposition}[Rank preservation at saturation]
\label{prop:saturation-rank-preservation}
Suppose the post-effort score $v(\theta,e)$ satisfies
$\partial_\theta v(\theta,e) > 0$ (C1).
Then the saturated score
$v^\infty(\theta) := \lim_{e\to\infty} v(\theta,e)$ is increasing in $\theta$.
\end{proposition}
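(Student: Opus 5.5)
The plan is to pass the pointwise ordering given by (C1) to the limit $e\to\infty$. Fix $\theta'>\theta$ in $\Theta$. For each fixed $e\ge 0$, the map $t\mapsto v(t,e)$ is differentiable with $\partial_\theta v(t,e)>0$ on $[\theta,\theta']$. By the mean value theorem,
\[
v(\theta',e)-v(\theta,e)=\partial_\theta v(\xi_e,e)\,(\theta'-\theta)>0
\]
for some $\xi_e\in(\theta,\theta')$. So $v(\theta',e)>v(\theta,e)$ for every $e$. Both limits $v^\infty(\theta)$ and $v^\infty(\theta')$ exist by hypothesis; when (C2) is available they also equal $\sup_e v(\cdot,e)$, since $v$ is nondecreasing in $e$. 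Since weak inequalities are preserved under limits, this gives $v^\infty(\theta')\ge v^\infty(\theta)$, i.e.\ $v^\infty$ is (weakly) increasing.

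The main obstacle is strictness. A limit of strict inequalities is in general only weak. For example, $v(\theta,e)=1-e^{-(1+e)}/(2+\theta)$ satisfies (C1) and (C2), yet $v^\infty\equiv 1$. So (C1) alone yields only the weak statement, and that is the reading of ``increasing'' I would prove.

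If strict monotonicity is needed downstream (e.g.\ so that saturation yields a strictly capability-consistent ordering), I would add a non-vanishing separation condition. The first one I would try is $\liminf_{e\to\infty}\bigl(v(\theta',e)-v(\theta,e)\bigr)>0$ for $\theta'>\theta$. An equivalent sufficient condition is that $\partial_\theta v(t,e)$ is bounded below by a positive constant, uniformly in $e$, on compact $\theta$-intervals. Under the mean-value identity above, this directly gives $v^\infty(\theta')-v^\infty(\theta)\ge m(\theta'-\theta)>0$.

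In the generalized power-law example \eqref{eq:log-score} one can read off the limit directly. Since $\beta(\theta)>0$, we have $\tilde v(\theta,e)\to 1$, so $v^\infty(\theta)=U(\theta)$. Strictness there reduces to $U$ being strictly increasing, which I would state as the concrete sufficient condition.
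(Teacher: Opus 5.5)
Your route is the paper's route. Both arguments fix $\theta'>\theta$, use (C1) to get $v(\theta',e)-v(\theta,e)>0$ at every $e$ (your mean value theorem step just makes this explicit), and pass to the limit $e\to\infty$.

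Where you part ways is the conclusion, and there you are right and the paper's proof is not. The paper writes $\lim_{e\to\infty} d_{ij}(e)>0$ directly from $d_{ij}(e)>0$ for all $e$. That is exactly the illegitimate passage of a strict inequality to a limit.

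Your example $v(\theta,e)=1-\exp(-(1+e))/(2+\theta)$ is valid:
\begin{itemize}
\item it takes values in $[0,1]$ and satisfies (C1) and (C2);
\item its effort gap $e^{\req}(s;\theta)-e^{\req}(s;\theta')$ is $0$, then increasing, then constant at $\log\frac{2+\theta'}{2+\theta}$, so it satisfies (C3) as well;
\item yet $v^\infty\equiv 1$.
\end{itemize}
Hence even the full \Cref{ass:post-effort-score} yields only weak monotonicity of $v^\infty$, and your weak reading is the most that can be proved. The strict conclusion the paper's proof asserts needs an additional hypothesis, such as your condition $\liminf_{e\to\infty}\bigl(v(\theta',e)-v(\theta,e)\bigr)>0$. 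The surrounding claim that saturation yields a capability-consistent ordering depends on that strictness.

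Your power-law observation $v^\infty=U$ makes the same point concretely. Strictness there hinges entirely on $U(\theta)$, which the paper's stated conditions on $\alpha,\beta$ do not constrain.

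Two small corrections. First, a uniform positive lower bound on $\partial_\theta v$ over compact $\theta$-intervals is a strictly stronger sufficient condition than the $\liminf$ condition, not an equivalent one. The gap can stay bounded away from zero while $\partial_\theta v(t,e)\to 0$ at some $t$. Second, write $\exp(-(1+e))$ rather than $e^{-(1+e)}$, since $e$ is already the effort variable.
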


More importantly, TbT also weakens the incentives for strategic post-training.
Since $v(\theta,e)$ exhibits diminishing returns in effort (C2), increasing $\etbt$ pushes all models into a regime where further improvements are harder.

\subsection{TbT Increases Leaderboard Climbing Cost}
To formalize this incentive effect, recall $e_r^\req(\etbt)$, the minimal additional effort required for model $r$ to overtake the adjacent competitor at baseline $\etbt$ (\Cref{def:overtake-effort}). This precisely captures the marginal difficulty of climbing the leaderboard. The next lemma shows that increasing the TbT baseline monotonically raises this overtaking cost:

\begin{lemma}[TbT Increases the Leaderboard Climbing Cost]
\label{lemma:tbt-increase-cost-of-climbing}
For any $\etbt\ge0$ and $r\in\{2,\dots,n\}$, the minimal just-overtake effort $e^\req_r(\etbt)$, and thus also the corresponding cost $c(e^\req_r(\etbt))$, is non-decreasing in $\etbt$.
\end{lemma}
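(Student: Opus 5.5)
The plan is to rewrite the just-overtake effort as a difference of two required-effort functions and then apply the single-crossing condition (C3). Write $t=\etbt$, let $s(t):=s_{r-1}(t)=v(\theta_{r-1},t)$, and let $T_r(t):=t+e_r^{\req}(t)=\inf\{E\ge t:\ v(\theta_r,E)>s(t)\}$ be the total effort model $r$ needs to strictly beat model $r-1$'s baseline. By (C2), $s(t)$ is nondecreasing in $t$. So the goal is to show that $T_r(t)-t$ is nondecreasing, and the natural route is to compare $T_r(t)$ with $e^{\req}(s(t);\theta_r)$ and $t$ with $e^{\req}(s(t);\theta_{r-1})$.

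\textbf{Step 1 (the constraint $E\ge t$ is vacuous).} By (C1), $v(\theta_r,E)<v(\theta_{r-1},E)\le s(t)$ for every $E\le t$. Hence $T_r(t)=\inf\{E\ge 0: v(\theta_r,E)>s(t)\}$.

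\textbf{Step 2 (replace the strict inequality by a weak one).} Since $v(\theta_r,\cdot)$ is continuous and nondecreasing, the strict-inequality infimum differs from $e^{\req}(s(t);\theta_r)$ only when $v(\theta_r,\cdot)$ has a flat piece at level $s(t)$. Concavity rules this out except at saturation: a concave, nondecreasing function that is constant on an interval is constant on the whole ray after it. At saturation the level $s(t)$ is never exceeded, so $T_r=+\infty$. Therefore, whenever $e_r^{\req}(t)<\infty$, we have $T_r(t)=e^{\req}(s(t);\theta_r)$.

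\textbf{Step 3 (identify $t$ with the higher model's required effort).} The same concavity argument applied to $\theta_{r-1}$ gives the following dichotomy. Either $v(\theta_{r-1},\cdot)$ is strictly increasing up to $t$, in which case $e^{\req}(s(t);\theta_{r-1})=t$. Or model $r-1$ has already saturated by time $t$, so $s(t)=v^\infty(\theta_{r-1})$. In the saturated case, (C1) gives $v(\theta_r,E)<v(\theta_{r-1},E)\le v^\infty(\theta_{r-1})$ for all $E$, hence $e_r^{\req}(t)=+\infty$. Moreover $s(t')$ stays at the same value for all $t'\ge t$, so $e_r^{\req}(t')=+\infty$ as well, and monotonicity holds trivially in that regime.

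\textbf{Step 4 (apply C3).} In the remaining case,
\[
e_r^{\req}(t)=e^{\req}(s(t);\theta_r)-e^{\req}(s(t);\theta_{r-1}).
\]
Take $t<t'$ with both values finite. Then $s(t)\le s(t')$, and (C3) with $\theta'=\theta_{r-1}>\theta=\theta_r$ gives $e_r^{\req}(t)\le e_r^{\req}(t')$. Since $c$ is nondecreasing by \Cref{ass:cost}, the cost $c(e_r^{\req}(t))$ is also nondecreasing in $t$.

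\textbf{Main obstacle.} The delicate step is the bookkeeping in Steps 2–3. One must make sure that the gap between the strict inequality in \Cref{def:overtake-effort} and the weak inequality in (C3), together with possible plateaus of $v$ in $e$, never breaks the identity in Step 4. The key fact I will establish first is the concavity lemma used in Steps 2 and 3: a concave, nondecreasing, bounded function of $e$ that is constant on an interval remains constant afterwards. Then I will check that every degenerate case falls into the $+\infty$ branch. The transitions from finite to infinite values are consistent with monotonicity, because once $e_r^{\req}=+\infty$ it stays $+\infty$ for all larger $t$.
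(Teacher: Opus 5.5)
Your proposal is correct and follows the paper's proof. Both write the just-overtake effort as $e^{\req}(s_{r-1}(\etbt);\theta_r)-e^{\req}(s_{r-1}(\etbt);\theta_{r-1})$, use (C2) to get that $s_{r-1}(\etbt)$ is nondecreasing, and conclude via (C3) and monotonicity of $c$. Your Steps 1--3 are more careful than the paper, which simply asserts $e^{\req}(s_{r-1}(\etbt);\theta_{r-1})=\etbt$ and handles the strict-versus-weak inequality by continuity alone; you correctly identify concavity as what makes both identities hold, since plateaus occur only at saturation, where $e^{\req}_r$ is $+\infty$ and stays so.
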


\Cref{fig:traj} (right) plots the minimal just-overtake effort across all models $\min_{r\in \{2,\ldots,n\}}{e}^\req_r(\etbt)$ ($y$-axis), as a function of $\etbt$ ($x$-axis).
We estimate ${e}^\req_r(\etbt)$ using the fitted post-training trajectories shown in the left panel. For example, for model Qwen2.5-7B at $\etbt=500$, we first compute the accuracy of Qwen2.5-14B at $\etbt=500$, 91.0\%. We then calculate how much additional training data Qwen2.5-7B requires to reach 91.0\% based on the fitted curve, which is 5,890.
We repeat this for all models (except for the best model Qwen2.5-14B) and plot the minimum across all rank $r\geq 2$.
The resulting curve shows that leaderboard climbing becomes rapidly more difficult as $\etbt$ grows. At $\etbt=0$, only 18 additional steps are needed to change the ranking, whereas at $\etbt=3,000$, the required effort rises to $384,668$ steps.

\subsection{TbT Restores Equilibrium Existence}
Since increasing TbT raises the effort required to overtake higher-ranked models, the all-zero equilibrium condition in \Cref{prop:zero-eq-tbt} becomes easier to satisfy at higher TbT levels:

\begin{proposition}
\label{prop:tbt-monotonic-incentive}
Given two tune-before-test adjustment levels $0\le \Delta^{tbt}_1 \le \Delta^{tbt}_2$, if the all-zero effort profile $\textbf{e}=(0,\dots,0)$ is a PNE under $\Delta^{tbt}_1$, then it is also a PNE under $\Delta^{tbt}_2$.
\end{proposition}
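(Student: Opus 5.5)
The plan is to chain \Cref{prop:zero-eq-tbt} with \Cref{lemma:tbt-increase-cost-of-climbing}. Both results are stated for an arbitrary TbT level, so the argument reduces to a monotonicity check on each of the finitely many adjacent-rank inequalities.

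First, assume the all-zero profile is a PNE under $\Delta^{tbt}_1$. The ``only if'' direction of \Cref{prop:zero-eq-tbt} then gives, for every $r\in\{2,\ldots,n\}$,
\[
c\!\left(e_r^{\req}(\Delta^{tbt}_1)\right)\ \ge\ R_{r-1}-R_r.
\]

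Second, for each fixed $r$, apply \Cref{lemma:tbt-increase-cost-of-climbing}. It says $e_r^{\req}(\cdot)$ is non-decreasing in the TbT level, and hence so is $c(e_r^{\req}(\cdot))$, because $c$ is non-decreasing by \Cref{ass:cost}. Since $\Delta^{tbt}_1\le\Delta^{tbt}_2$, this yields
\[
c\!\left(e_r^{\req}(\Delta^{tbt}_2)\right)\ \ge\ c\!\left(e_r^{\req}(\Delta^{tbt}_1)\right)\ \ge\ R_{r-1}-R_r .
\]
The reward gaps $R_{r-1}-R_r$ are exogenous and do not depend on the TbT level, so the right-hand side is unchanged.

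Third, the condition \eqref{eq:zero-eq-condition} now holds at $\Delta^{tbt}_2$ for every $r$. The ``if'' direction of \Cref{prop:zero-eq-tbt} then shows that the all-zero profile is a PNE under $\Delta^{tbt}_2$.

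One subtlety concerns infinite values. If the set defining $e_r^{\req}(\Delta^{tbt})$ is empty, the infimum is $+\infty$; this happens when $v^\infty(\theta_r)\le s_{r-1}(\Delta^{tbt})$. In that case I will read $c(+\infty)=+\infty$, consistent with $\lim_{e\to\infty}c(e)=\infty$ in \Cref{ass:cost}, so the inequality holds trivially. Monotonicity is also preserved: once overtaking is impossible at $\Delta^{tbt}_1$, it stays impossible at $\Delta^{tbt}_2$, since \Cref{lemma:tbt-increase-cost-of-climbing} covers the extended-real-valued case.

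The proof has essentially no real obstacle, because all the substantive content sits in \Cref{lemma:tbt-increase-cost-of-climbing}, which rests on (C2)–(C3), and in \Cref{prop:zero-eq-tbt}. The only care needed is to confirm that both earlier results are stated for arbitrary TbT levels, that the inequality direction is preserved under the non-decreasing map $c$, and that the $+\infty$ convention above is made explicit.
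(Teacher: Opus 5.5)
Your proposal is correct and is essentially the paper's own proof. You use the ``only if'' direction of \Cref{prop:zero-eq-tbt} at $\Delta^{tbt}_1$, lift each adjacent inequality to $\Delta^{tbt}_2$ via \Cref{lemma:tbt-increase-cost-of-climbing} and the monotonicity of $c$, and conclude with the ``if'' direction of \Cref{prop:zero-eq-tbt}; your explicit handling of $e_r^{\req}=+\infty$ (infeasible overtaking stays infeasible because $s_{r-1}$ is nondecreasing in the TbT level) is a small clarification the paper leaves implicit.
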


Thus, TbT has a monotone stabilizing effect on post-training incentives: once no model developer finds it profitable to exert additional effort, this remains true under any larger TbT intervention.

\subsection{How Much TbT is Enough?}
In practice, a benchmark evaluator may not have enough compute to apply a large TbT adjustment to every submitted model, due to resource, latency, or cost constraints. This raises a natural question: what is the smallest TbT level that eliminates incentives for further strategic fine-tuning? This motivates us to view TbT as an incentive-control parameter with a corresponding threshold.

\begin{definition}[Stabilizing TbT Threshold]
\label{def:tbt-threshold}
Define the \emph{stabilizing TbT threshold} as
\[
{\etbt}^\star
:=
\inf\Bigl\{
\etbt\ge 0: 
c\big(e^\req_r(\etbt)\big)\ \ge\ R_{r-1}-R_r,
\forall r \in \{2, \cdots, n\}
\Bigr\},
\]

\noindent that is, the smallest TbT adjustment for which no model can profitably overtake the model directly above it at the baseline level.
\end{definition}

At ${\etbt}^\star$, the cost of overtaking the immediate neighbor weakly exceeds the corresponding reward gain for every adjacent pair. By \Cref{prop:tbt-monotonic-incentive}, once this condition holds, it continues to hold for all larger TbT levels. Hence, any $\etbt \ge {\etbt}^\star$ eliminates profitable overtaking incentives:

\begin{corollary}
\label{cor:tbt-stabilizes-incentives}
For any $\etbt \ge {\etbt}^\star$, the induced follower game admits a PNE in which all model developers choose zero additional effort, i.e., $\textbf{e}^*=(0,\dots,0)$. In particular, no model developer engages in strategic post-training.
\end{corollary}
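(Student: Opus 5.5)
The plan is to combine \Cref{prop:zero-eq-tbt} with the monotonicity from \Cref{lemma:tbt-increase-cost-of-climbing} (equivalently \Cref{prop:tbt-monotonic-incentive}). By \Cref{prop:zero-eq-tbt}, the all-zero profile is a PNE at level $\etbt$ exactly when $c(e_r^\req(\etbt)) \ge R_{r-1}-R_r$ for every $r\in\{2,\ldots,n\}$. So it suffices to show that every $\etbt \ge {\etbt}^\star$ lies in the set
\[
S:=\bigl\{\etbt\ge 0:\ c(e^\req_r(\etbt))\ge R_{r-1}-R_r\ \forall r\bigr\}.
\]
Once that is done, the assertion that no developer engages in strategic post-training is immediate, since all $e_i^*=0$.

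The first step handles the case $\etbt > {\etbt}^\star$. By the definition of infimum in \Cref{def:tbt-threshold}, there is some $\Delta_1\in S$ with ${\etbt}^\star\le \Delta_1 < \etbt$. \Cref{lemma:tbt-increase-cost-of-climbing} says each map $\etbt\mapsto c(e^\req_r(\etbt))$ is non-decreasing. Hence the inequality defining $S$ persists from $\Delta_1$ to $\etbt$, so $\etbt\in S$. Alternatively, one can invoke \Cref{prop:tbt-monotonic-incentive} directly, using the all-zero PNE at $\Delta_1$.

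The main obstacle is the boundary point $\etbt={\etbt}^\star$ itself, because the infimum need not be attained. I would show that $S$ is closed, which amounts to showing that each map $\etbt\mapsto c(e^\req_r(\etbt))$ is right-continuous, or at least upper semicontinuous from the right, at ${\etbt}^\star$. Since $S$ is an up-set and contains points arbitrarily close to ${\etbt}^\star$ from above, right-continuity would let us pass to the limit in the non-strict inequality. There are two ingredients:
\begin{itemize}
\item $c$ is convex and finite on $\mathbb{R}_{\ge0}$, hence continuous on $(0,\infty)$; at $0$ it is non-decreasing with $c(0)=0$, which is the needed direction.
\item $e^\req_r$ is defined as an infimum over the set where $v(\theta_r,\etbt+e)>s_{r-1}(\etbt)$. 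I would use the continuity of $v$ and the monotonicity in $e$ from (C2) to argue that the strict superlevel set moves continuously as $\etbt$ varies.
\end{itemize}

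The delicate scenario for the second ingredient is $v(\theta_r,\cdot)$ being flat near the crossing level, which could make $e^\req_r$ jump. I would first try to exclude this using the strict inequality together with (C3). If that fails, the fallback is to interpret the corollary with ${\etbt}^\star$ attained, or to state it for $\etbt>{\etbt}^\star$ and note that equality holds whenever the infimum is a minimum. In either version, the conclusion then follows from \Cref{prop:zero-eq-tbt}.
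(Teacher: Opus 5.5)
Your route is the same as the paper's. The paper's proof is one line: by the definition of ${\etbt}^\star$, $c(e^\req_r(\etbt))\ge R_{r-1}-R_r$ for every $\etbt\ge{\etbt}^\star$, and then \Cref{prop:zero-eq-tbt} applies. It uses the monotonicity of \Cref{prop:tbt-monotonic-incentive} implicitly and assumes without comment that the infimum is attained at $\etbt={\etbt}^\star$. Your treatment of $\etbt>{\etbt}^\star$ is correct. The boundary point you flag is a real omission in the paper's argument, not a flaw in yours.

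The step you left open does close, without (C3) and without your fallback. The strict inequality in \Cref{def:overtake-effort} is what makes it work.
\begin{itemize}
\item Put $g:=v(\theta_r,\cdot)$ and $G(s):=\inf\{x\ge 0: g(x)>s\}\in[0,\infty]$. Since $g(\etbt)<s_{r-1}(\etbt)$ by (C1) and $g$ is non-decreasing, $e^\req_r(\etbt)=G(s_{r-1}(\etbt))-\etbt$.
\item If $s_k\downarrow s$, then $\{x: g(x)>s\}=\bigcup_k\{x: g(x)>s_k\}$, so $G(s)=\lim_k G(s_k)$. Thus $G$ is right-continuous, including when its value is $+\infty$.
\item Because $v(\theta_{r-1},\cdot)$ is continuous and non-decreasing, $\Delta_k\downarrow{\etbt}^\star$ gives $s_{r-1}(\Delta_k)\downarrow s_{r-1}({\etbt}^\star)$. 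Hence $e^\req_r(\Delta_k)\to e^\req_r({\etbt}^\star)$.
\end{itemize}
The flat-piece scenario you worried about does not break this. A flat piece of $g$ at the crossing level only creates a jump from the left, which does not matter for an infimum approached from above.

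It remains to compose with $c$. The function $c$ is continuous on $[0,\infty)$ and $c(e)\to\infty$, so set $c(+\infty)=+\infty$. You can then pass to the limit in the non-strict inequality and obtain ${\etbt}^\star\in S$ whenever $S\neq\emptyset$. If $S=\emptyset$, then ${\etbt}^\star=+\infty$ and the statement is vacuous.

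One small correction to your first ingredient. At $0$ you would need right-continuity of $c$, which follows from convexity and $c\ge 0$ via $c(tx)\le t\,c(x)$. Monotonicity gives the opposite inequality, not the one you need. The point is moot anyway, because $e^\req_r(\etbt)>0$ always, by (C1) and continuity of $v$.
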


Under the leaderboard designer objective in \Cref{def:leader-utility}, ${\etbt}^\star$ is also the optimal TbT choice: since it is the smallest intervention that induces a follower equilibrium preserving the true capability ordering, it therefore achieves the designer’s ranking objective at minimum cost.

\subsection{Case Study Under Generalized Power-Law}
To build intuition for the scale of the stabilizing TbT level ${\etbt}^\star$, we consider the generalized scaling law introduced above:
\[
\sigma^{-1}\!\bigl(\tilde v(\theta,e)\bigr)=\alpha(\theta)+\beta(\theta)\log(1+e),
\]
and assume effort costs satisfy $c(e)\ge \kappa e$ for some constant $\kappa>0$, so that effort is costly at least linearly. We focus on the regime in which overtaking remains feasible. For an adjacent pair $(r-1,r)$, define $\rho_r := \frac{R_{r-1}-R_r}{\kappa}$,
which expresses the reward gap in units of effort. Let $e_r^{\req}(0)$ denote the minimal effort required for model $r$ to catch up to model $r-1$ when $\etbt=0$. We then define
\[
\lambda_r := \frac{\rho_r}{e_r^{\req}(0)},
\]
which measures the reward gap relative to the baseline catch-up difficulty.

The next proposition shows that, under this scaling law, the stabilizing TbT level for pair $(r-1,r)$ grows polynomially in $\lambda_r$, with exponent
$\gamma_r := \frac{\beta(\theta_r)}{\beta(\theta_{r-1})}\le 1$ interpreted as a relative learning-rate ratio between the two models.

\begin{proposition}[Stabilizing TbT under generalized scaling]
\label{prop:powerlaw-linear-cost-stabilizing-tbt}
Under the generalized scaling law, and the cost satisfies $c(e)\geq \kappa e$,
there exists a stabilizing TbT threshold ${\etbt}_r^\star$ such that any ${\etbt}\ge {\etbt}_r^\star$ eliminates profitable overtaking deviations from rank $r$ to rank $r-1$. Moreover, in the regime where catch-up remains feasible,
\[
\etbtrstar = O\!\bigl(\lambda_r^{\gamma_r}\bigr),
\]
up to a constant factor depending only on $e_r^{\req}(0)$.
\end{proposition}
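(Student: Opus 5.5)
The plan is to compute $e_r^{\req}(\etbt)$ in closed form under the generalized scaling law, and then find the $\etbt$ at which the linear lower bound on cost, $\kappa\, e_r^{\req}(\etbt)$, exceeds the reward gap $R_{r-1}-R_r$. Since $c(e)\ge \kappa e$, we have $c(e_r^{\req}(\etbt))\ge R_{r-1}-R_r$ whenever $e_r^{\req}(\etbt)\ge \rho_r$. \Cref{lemma:tbt-increase-cost-of-climbing} says $e_r^{\req}$ is non-decreasing in $\etbt$, so it suffices to exhibit one level $\etbtrstar$ with $e_r^{\req}(\etbtrstar)\ge\rho_r$. Monotonicity then gives the same conclusion for every larger $\etbt$, which is the first claim.

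To get the closed form, I will assume, as in the case study, a common normalization $L,U$ across the two adjacent models, so that comparing scores reduces to comparing logits. Model $r$ just overtakes $r-1$ at baseline $\etbt$ when
\[
\alpha_r+\beta_r\log\bigl(1+\etbt+e\bigr)=\alpha_{r-1}+\beta_{r-1}\log\bigl(1+\etbt\bigr),
\]
where $\alpha_r=\alpha(\theta_r)$ and $\beta_r=\beta(\theta_r)$. Solving for $e$ gives
\[
1+\etbt+e_r^{\req}(\etbt)=K_r\,\bigl(1+\etbt\bigr)^{1/\gamma_r},\qquad K_r:=\exp\!\bigl((\alpha_{r-1}-\alpha_r)/\beta_r\bigr)\ge 1 .
\]
Evaluating at $\etbt=0$ gives $K_r=1+e_r^{\req}(0)$. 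Hence
\[
e_r^{\req}(\etbt)=\bigl(1+e_r^{\req}(0)\bigr)(1+\etbt)^{1/\gamma_r}-(1+\etbt).
\]
The "catch-up remains feasible" regime means this expression is finite, i.e.\ $\beta_r>0$.

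Next I need a clean lower bound in $\etbt$. Since $\gamma_r\le 1$, we have $(1+\etbt)^{1/\gamma_r}\ge 1+\etbt$, and therefore
\[
e_r^{\req}(\etbt)\ge e_r^{\req}(0)\,(1+\etbt)^{1/\gamma_r}.
\]
This is the step to check carefully: the subtraction of $(1+\etbt)$ has to be absorbed, and it is, because $K_r(1+\etbt)^{1/\gamma_r}-(1+\etbt)\ge (K_r-1)(1+\etbt)^{1/\gamma_r}$.

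It then suffices to have $e_r^{\req}(0)(1+\etbt)^{1/\gamma_r}\ge \rho_r$, that is, $(1+\etbt)\ge \lambda_r^{\gamma_r}$. So I will set
\[
\etbtrstar:=\max\{0,\ \lambda_r^{\gamma_r}-1\}=O\bigl(\lambda_r^{\gamma_r}\bigr).
\]

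The main obstacle is interpretive rather than computational. The stated bound carries "a constant factor depending only on $e_r^{\req}(0)$", while my derivation absorbs $e_r^{\req}(0)$ into $\lambda_r$. If instead the normalization $L(\theta),U(\theta)$ differs between the two models, the logit equation gains a $\theta$-dependent offset. I would handle that by redefining $K_r$ through $e_r^{\req}(0)$ in the same way, since the relation $K_r=1+e_r^{\req}(0)$ is exactly what ties the constant to $e_r^{\req}(0)$. The remaining concern is the edge case $\lambda_r<1$, where the threshold is simply $0$.
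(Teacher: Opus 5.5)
Your argument is correct and has the same skeleton as the paper's proof. You solve the logit equation to get $1+\etbt+e_r^{\req}(\etbt)=K_r(1+\etbt)^{1/\gamma_r}$, identify $K_r=\exp\bigl((\alpha_{r-1}-\alpha_r)/\beta_r\bigr)$ with $1+e_r^{\req}(0)$, and use $c(e)\ge\kappa e$ to reduce unprofitability to $e_r^{\req}(\etbt)\ge\rho_r$.

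Where you differ is in how you treat the subtracted $(1+\etbt)$. The paper does not bound it. It restricts to the regime where the leading term dominates, drops $(1+\etbt)$, and treats the offset $\bar\alpha_r(\etbt)$ as approximately constant. That gives the rule of thumb $1+\etbt\gtrsim\bigl(\rho_r/(1+e_r^{\req}(0))\bigr)^{\gamma_r}$. The ``constant factor depending on $e_r^{\req}(0)$'' comes from converting this to $\lambda_r$. Since $e_r^{\req}(\etbt)$ lies strictly below the leading term, the paper's threshold is an approximation, not a sufficient condition.

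Your inequality $K_r(1+\etbt)^{1/\gamma_r}-(1+\etbt)\ge (K_r-1)(1+\etbt)^{1/\gamma_r}$, which uses $\gamma_r\le 1$, makes the bound exact. So $\etbtrstar=\max\{0,\lambda_r^{\gamma_r}-1\}$ is a genuine sufficient threshold, with constant $1$ and no dominance caveat. Because your lower bound is itself increasing in $\etbt$, you do not even need the monotonicity lemma for the ``any larger $\etbt$'' part.

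The price is an explicit common-normalization hypothesis, but the paper pays it too. Its step ``using the scaling law for $\theta_{r-1}$'' replaces the logit of $(v(\theta_{r-1},\etbt)-L(\theta_r))/(U(\theta_r)-L(\theta_r))$ by $\sigma^{-1}(\tilde v(\theta_{r-1},\etbt))$. That substitution is only valid when $L$ and $U$ agree across the pair.

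Your suggested fallback for heterogeneous $L(\theta),U(\theta)$ would not go through as stated. In that case the target logit is no longer affine in $\log(1+\etbt)$, so no choice of $K_r$ restores the closed form. It is cleaner to state common normalization as an assumption.
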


The exponent $\gamma_r$ admits a natural interpretation as a relative learning-rate ratio. When $\gamma_r$ is small, the stronger model benefits more from additional effort than the weaker one, so the ranking becomes easier to stabilize.
Conversely, when $\gamma_r\approx 1$, the two models benefit similarly from effort, and a larger TbT baseline is needed to deter overtaking.
Thus, the required TbT level grows polynomially in the effective incentive $\lambda_r$, and the growth is slower when the effort-response gap between the two models is larger. 

Finally, to stabilize the full leaderboard, it suffices to choose
\[
{\etbt}^\star := \max_{r\in \{2, \cdots, n\}} \etbtrstar,
\]
so that the global threshold is determined by the hardest adjacent pair to stabilize.

\paragraph{A concrete example.}
We estimate the baseline catch-up difficulty $e^\req_r(0)$ and the learning-rate
ratio $\gamma_r$ from the fitted trajectories in \Cref{fig:traj}. 
Among all adjacent
pairs, the hardest-to-stabilize pair maximizes ${\etbt}^\star=\max_r \Delta_r^{tbt\star}$.
For this pair, we estimate $e^\req_r(0)\approx 13.7$ and $\gamma_r\approx 0.33$.
Suppose the corresponding reward gap translates to $\rho_r=1000$ units of effort,
so the effective incentive is
$\lambda_r=\frac{\rho_r}{e^\req_r(0)}\approx \frac{1000}{13.7}\approx 73$.
Using Proposition~\ref{prop:powerlaw-linear-cost-stabilizing-tbt}, a stabilizing
TbT level is
\[
\Delta_r^{tbt\star}
\approx
\left(\frac{\rho_r}{e^\req_r(0)}\right)^{\gamma_r}
=
\left(\frac{1000}{13.7}\right)^{0.33}
\approx 4,
\]

That is, adding roughly 4 units of baseline TbT effort is already enough to deter overtaking for this hardest-to-stabilize adjacent pair. Detailed estimation is deferred to the appendix.

%% file: Sections/conclusion.tex
\section{Conclusion, Limitations and Future Work}
We study benchmarking as a mechanism design problem and show that the resulting competition among model developers can induce persistent ``just-overtake'' incentives and may prevent any Nash equilibrium from existing.
We demonstrate that tune-before-test acts as an effective incentive-control lever: by pushing models into a diminishing-returns regime, TbT raises the marginal cost of further benchmark-specific post-training and amplifies the effort required to overtake nearby competitors. 
Our results show that even a small amount of TbT can have a disproportionately large stabilizing effect, which we demonstrate both empirically and theoretically. 
In practice, leaderboard designers could estimate the minimal stabilizing TbT baseline ${\etbt}^\star$ using historical score gaps and the observed fine-tuning effort required to change rankings. 

% limitation on the model
These insights, however, rest on several simplifying assumptions. In particular, our model abstracts from many features of real benchmarking environments in order to isolate the core incentive structure induced by rank-based evaluation. A natural next step is to understand how these incentives interact with features that are common in practice, such as noisy evaluations, uncertainty about competitors, and more flexible budget or cost structures. Extending the framework in these directions would help clarify the robustness of our insights in more realistic benchmarking environments.

% limitation on TbT as an intervention method
At the same time, TbT is not costless: it requires additional evaluation resources and may blur the distinction between a model’s underlying generalization ability and its capacity to adapt during evaluation. More broadly, these trade-offs highlight that benchmark design does not merely determine how models are tested, but also shapes the incentives for how they are developed. TbT should therefore be understood as a useful but imperfect design lever, rather than a complete solution to strategic behavior in evaluation.

%% file: Sections/acknowledgement.tex
\section*{Acknowledgement}
We thank Jiduan Wu, Vivian Nastl,
Ana-Andreea Stoica, André Cruz,
Ricardo Dominguez-Olmedo, Florian Dorner, Dingfan Chen, Tom Sühr,
Nikhil Chandak, Mila Gorecki, Mina Remeli, Omri Ben-Dov, and anonymous reviewers from the ICLR AIMS workshop for the helpful discussions and feedback on this work.

%% file: Sections/appendix.tex
\section{Primary Notation Table}

\begin{table}[!ht]
    \centering
    \small
    \setlength{\tabcolsep}{5pt}
    \begin{tabularx}{\linewidth}{c >{\raggedright\arraybackslash}X}
        \toprule
        Symbol & Description \\
        \midrule
        $\theta \in \mathbbm{R}_{\ge 0}$ & one-dimensional latent model capability \\
        $e \in \mathbbm{R}_{\ge 0}$ & benchmark-specific training effort \\
        $v(\theta,e)\in[0,1]$ & post-effort score (benchmark performance) \\
        $c(e)$ & cost of exerting effort $e$ \\
        $(R_j)_{j=1}^n$ & rank-based reward scheme: the model ranked $j$ receives reward $R_j$ \\
        $\etbt \in \mathbbm{R}_{\ge 0}$ & tune-before-test (TbT) baseline chosen by the leaderboard designer \\
        $s_r(\etbt):=v(\theta_r,\etbt)$ & baseline score of model $r$ under TbT level $\etbt$ \\
        $e^\req(s;\theta)$ & minimal effort required for a model with capability $\theta$ to reach score $s$, i.e.,
        $e^\req(s;\theta):=\inf\{e\ge 0:\ v(\theta,e)\ge s\}$ \\
        $e_r^\req(\etbt)$ & minimal additional effort required for model $r$ to overtake model $r-1$ at TbT level $\etbt$, i.e.,
        $e_r^\req(\etbt):=\inf\{e\ge 0:\ v(\theta_r,\etbt+e)> s_{r-1}(\etbt)\}$ \\
        $\sigma:\mathbbm{R}\to[0,1]$ & logit link function \\
        $L(\theta)$ & lower attainable performance level for a model with capability $\theta$ \\
        $U(\theta)$ & upper attainable performance level for a model with capability $\theta$ \\
        $\tilde v(\theta,e):=\frac{v(\theta,e)-L(\theta)}{U(\theta)-L(\theta)}\in[0,1]$ & normalized post-effort score \\
        $\alpha(\theta)$ & logit of the baseline performance when $e=0$ \\
        $\beta(\theta)>0$ & coefficient governing how effectively extra effort translates into performance \\
        $\kappa > 0$ & constant lower bound on the marginal cost of effort, i.e., $c(e)\ge \kappa e$ \\
        $\rho_r := (R_{r-1}-R_r)/\kappa$ & effective reward gap (in effort units) for model $r$ \\
        $\lambda_r:=\rho_r/e_r^\req(0)$ & effective incentive parameter \\
        $\gamma_r:=\beta(\theta_r)/\beta(\theta_{r-1})\le 1$ & learning-rate ratio for the pair $(r-1,r)$ \\
        $\etbtrstar$ & stabilizing TbT threshold for the adjacent pair $(r-1,r)$ \\
        ${\etbt}^\star:=\max_{r\in\{2,\ldots,n\}} \etbtrstar$ & global stabilizing TbT threshold \\
        \bottomrule
    \end{tabularx}
    \vspace{0.5em}
    \caption{Primary notation.}
    \label{tab:notation_table}
\end{table}

\newpage 

\section{Missing Proof in \Cref{sec:game}}
\paragraph{Equivalence of heterogeneous separable costs to a homogeneous same-family cost}

\begin{proposition}[Equivalence of heterogeneous separable costs to a homogeneous same-family cost]
\label{prop:equivalence-cost-fn}
Fix a common cost shape $c:\mathbb{R}_+\to\mathbb{R}_+$ that is continuous, strictly increasing, and satisfies $c(0)=0$. 
Consider each model $i$ chooses effort $e_i\in\mathbb{R}_+$, is ranked by a score $v(\theta_i,e_i)$, and has a heterogeneous but multiplicatively separable cost
\[
C_i(e_i;\theta_i)=\gamma_i\,c(e_i)\qquad\text{with }\gamma_i>0.
\]
Then there exists an \emph{equivalent} game in which every model developer chooses $z_i\in\mathbb{R}_+$, all model developers share the \emph{same} cost function of the \emph{same form} $c(\cdot)$ (i.e., homogeneous cost $c(z_i)$), and the only change is a type-dependent relabeling of the score:
\[
\hat v(\theta_i,z_i)\;:=\;v\!\Big(\theta_i,\,c^{-1}\!\Big(\frac{c(z_i)}{\gamma_i}\Big)\Big).
\]
Specifically, the map
\[
\Phi_i:\ e_i\ \Leftrightarrow\ z_i:=c^{-1}\!\big(\gamma_i\,c(e_i)\big)
\]
is a bijection for each $i$, and for every profile $(e_j)_j$ with image $(z_j)_j$ we have
\[
R_{\mathrm{rank}(v(\theta_i,e_i))}\;-\;\gamma_i\,c(e_i)
\;=\;
R_{\mathrm{rank}(\hat v(\theta_i,z_i))}\;-\;c(z_i)\quad\text{for all }i.
\]
Consequently, the two games induce identical rankings, payoffs, best responses, and Nash equilibria up to the one-to-one reparametrization.
\end{proposition}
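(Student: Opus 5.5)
The argument is a change of variables. Since $c$ is continuous and strictly increasing on $\mathbb{R}_+$ with $c(0)=0$, it is a bijection from $\mathbb{R}_+$ onto its range $[0,\bar c)$, where $\bar c=\sup c\in(0,\infty]$. We will first take the range of $c$ to be all of $\mathbb{R}_+$, so $\bar c=\infty$; this holds, for instance, under \Cref{ass:cost}, where $c(e)\to\infty$. Then $c^{-1}:\mathbb{R}_+\to\mathbb{R}_+$ is well defined, continuous and strictly increasing. Consequently, for each $\gamma_i>0$ the map $\Phi_i(e)=c^{-1}(\gamma_i c(e))$ is a composition of bijections of $\mathbb{R}_+$, so it is itself a strictly increasing bijection. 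Its inverse is $\Phi_i^{-1}(z)=c^{-1}(c(z)/\gamma_i)$, which one checks directly by composing the two maps.

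Next, fix a profile $(e_j)_j$ with image $z_j=\Phi_j(e_j)$. We verify two identities for every $j$. The cost identity $c(z_j)=c(c^{-1}(\gamma_j c(e_j)))=\gamma_j c(e_j)$ holds by construction. The score identity is $\hat v(\theta_j,z_j)=v(\theta_j,\Phi_j^{-1}(z_j))=v(\theta_j,e_j)$. Because all scores coincide, the rank functional $\operatorname{rank}(\cdot)=1+\sum_{k\ne j}\mathbbm{I}\{\cdot_k>\cdot_j\}$ yields the same rank for every player under the same deterministic tie-breaking. Substituting both identities gives the stated payoff equality $R_{\mathrm{rank}(v(\theta_i,e_i))}-\gamma_i c(e_i)=R_{\mathrm{rank}(\hat v(\theta_i,z_i))}-c(z_i)$.

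Finally, let $\Phi=\prod_j\Phi_j$, a bijection of strategy profiles. The payoff equality says $U_i=\hat U_i\circ\Phi$ for all $i$. Unilateral deviations correspond: changing $e_i$ to $e_i'$ while keeping $e_{-i}$ fixed maps to changing $z_i$ to $\Phi_i(e_i')$ while keeping $z_{-i}$ fixed, because $\Phi$ acts coordinatewise. Hence $e_i\in\arg\max U_i(\cdot;e_{-i})$ if and only if $\Phi_i(e_i)\in\arg\max\hat U_i(\cdot;z_{-i})$. So best responses correspond, and therefore Nash equilibria and the induced rankings correspond as well. A TbT baseline adds no difficulty: one replaces $v(\theta_i,e)$ by $v(\theta_i,\etbt+e)$ throughout.

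The main obstacle is the range issue. The proposition's hypotheses alone do not force $c$ to be unbounded. If $\bar c<\infty$ and $\gamma_i>1$, then $\gamma_i c(e)$ may fall outside the range of $c$, so $\Phi_i$ is undefined there. I would first handle this by invoking \Cref{ass:cost}, which gives unboundedness. Without it, I would restrict each $\Phi_i$ to the set of efforts where it is defined and state the equivalence for the game with correspondingly truncated strategy sets. Alternatively, I would normalize so that $\gamma_i\le 1$ by factoring out $\max_j\gamma_j$ from all payoffs, which rescales the rewards $R_j$ uniformly and leaves the equilibrium set unchanged. This matches the ``after rescaling rewards'' phrasing in the footnote.
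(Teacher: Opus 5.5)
Your argument is correct and is the same change of variables the paper uses: define $\Phi_i(e)=c^{-1}(\gamma_i c(e))$ with inverse $c^{-1}(c(z)/\gamma_i)$, check $c(z_i)=\gamma_i c(e_i)$ and $\hat v(\theta_i,z_i)=v(\theta_i,e_i)$, and transfer payoffs and best responses coordinatewise. Your remark that this needs $c$ to map onto $\mathbb{R}_+$ (e.g., via $\lim_{e\to\infty}c(e)=\infty$) is a fair point the paper's proof skips when it asserts $c^{-1}$ exists; note, though, that your normalization fallback ($\gamma_i\le 1$) makes $\Phi_i$ defined everywhere but, for bounded $c$ and $\gamma_i<1$, not onto $\mathbb{R}_+$, so it still needs the truncated strategy sets you mention.
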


\begin{proof}
Because $c$ is strictly increasing, $c^{-1}$ exists and is strictly increasing. 
For each $i$, define the type-wise bijection $\Phi_i(e):=c^{-1}(\gamma_i\,c(e))$, with inverse
\[
\Phi_i^{-1}(z)\;=\;c^{-1}\!\Big(\frac{c(z)}{\gamma_i}\Big).
\]
Under $\Phi_i$, the heterogeneous cost transforms as
\[
\gamma_i\,c(e_i)\;=\;c\!\big(\Phi_i(e_i)\big)\;=\;c(z_i),
\]
so all model developers share the same cost function $c(\cdot)$ in the $z$-parametrization.
Define $\hat v(\theta_i,z_i):=v\!\big(\theta_i,\Phi_i^{-1}(z_i)\big)=v\!\big(\theta_i,c^{-1}(c(z_i)/\gamma_i)\big)$. 
Then for any profile $(e_j)_j$ with the reparametrization variable $(z_j)_j$, we have:
\[
R_{\rank(v(\theta_i,e_i))}-\gamma_i c(e_i)
\;=\;
R_{\rank\big(v(\theta_i,\Phi_i^{-1}(z_i))\big)}-c(z_i)
\;=\;
R_{\rank\big(\hat v(\theta_i,z_i)\big)}-c(z_i),
\]
so the utilities are preserved. Hence, the best responses and equilibria correspond one-to-one via $\Phi$, establishing the equivalence between the two settings.
\end{proof}

\begin{figure}
    \centering
    \includegraphics[width=0.8\linewidth]{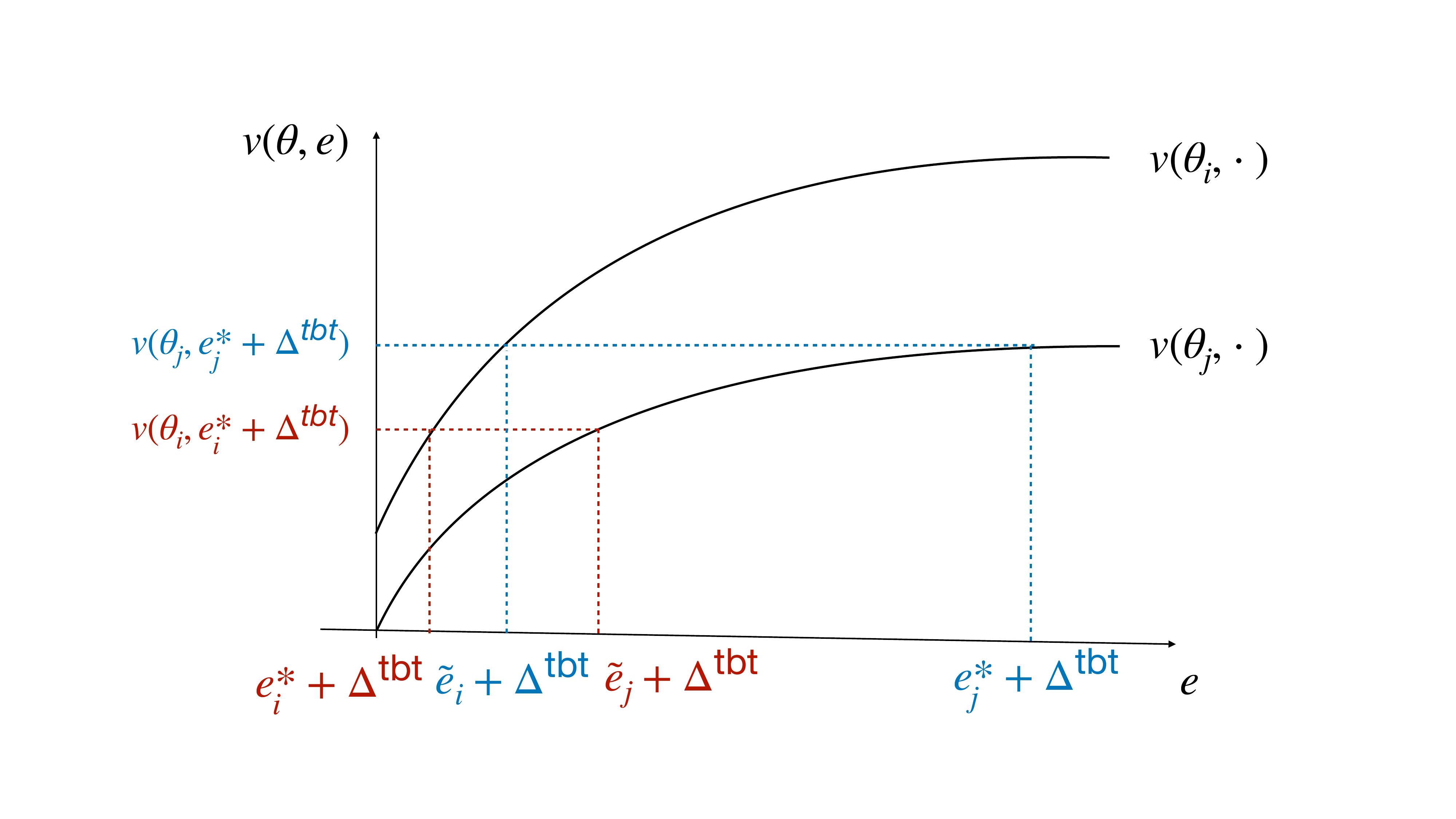}
    \caption{Illustration of the notations and functions used in the proof of \Cref{prop:order-preserve}. The post-effort score function \( v(\theta, e) \) satisfies the conditions in \Cref{ass:post-effort-score}. Here, \( e_i^* \) and \( e_j^* \) denote the equilibrium efforts of model developers \( i \) and \( j \), respectively. The counterfactual efforts \( \tilde{e}_i \) and \( \tilde{e}_j \) are defined as the efforts each model developer would need to match the other model developer's post-effort score given a TbT effort level $\etbt$, i.e., $v(\theta_i, \tilde{e}_i + \etbt ) = v(\theta_j, e_j^* + \etbt), v(\theta_j, \tilde{e}_j + \etbt) = v(\theta_i, e_i^* + \etbt).$ }
    \label{fig:notation-demonstration}
\end{figure}

% \section{Leaderboard designer's utility}
% \label{app:leader-utility}
% A minimal way to capture this objective is through a \emph{ranking correctness} criterion:

% \begin{definition} (Leaderboard Designer's Utility)
% \label{def:leader-utility}
% Given a tune-before-test level $\etbt$ and a follower's effort profile $e = (e_1, \cdots, e_n)$ inducing post-effort scores $v_i = v(\theta_i, e_i + \etbt)$, the leaderboard designer's utility is:
% \begin{align}
% \label{eqn:leader-utility}
%     U^L(\etbt; e) = R^L \cdot \mathbbm{I}[\rank(v_i) = \rank(\theta_i), \forall i] - n\cdot c^L(\etbt),
% \end{align}
% where $R^L > 0$ is the reward from achieving a capability-consistent ranking, and $c^L(\etbt)$ is the per-model cost of applying a tune-before-test.
% \end{definition}

% This binary formulation captures the designer’s core concern: whether the leaderboard correctly orders models by latent capability. One could alternatively model the designer’s objective more smoothly using a rank-correlation metric such as Kendall’s $\tau$ \citep{kendall1938rank}, which rewards partial agreement between score-based and capability rankings, minus the cost of tune-before-test adjustment.

\section{Proofs for \Cref{sec:equilibrium}}

\paragraph{Proof for \Cref{prop:order-preserve}}

\begin{proof}
\label{proof:rank-preservation}
Fix any $\etbt\ge 0$ and let $\textbf{e}^*$ be a PNE of the induced follower game.
Suppose for contradiction that there exist models $i,j$ such that $\theta_i>\theta_j$ but
\begin{equation}
\label{eq:misorder-assump}
v_i^* \;:=\; v(\theta_i,\etbt+e_i^*) \;<\; v(\theta_j,\etbt+e_j^*) \;=:\; v_j^* .
\end{equation}

\paragraph{Counterfactual efforts.}
For any target score $s\in[0,1]$, recall the minimal additional effort (beyond $\etbt$) needed for capability $\theta$ to
reach $s$ is:
\[
e^\req(s;\theta, \etbt):=\inf\{e\ge 0:\ v(\theta,\etbt+e)\ge s\}.
\]
Define the counterfactual efforts
\[
\tilde e_i:= e^\req (v_j^*;\theta_i, \etbt),
\qquad
\tilde e_j:= e^\req(v_i^*;\theta_j, \etbt).
\]
Then by definition,
\[
v(\theta_i,\etbt+\tilde e_i)\ge v_j^*,
\qquad
v(\theta_j,\etbt+\tilde e_j)\ge v_i^*.
\]
Moreover, since $v_j^*>v_i^*$ and $v(\theta,\cdot)$ is nondecreasing in effort (C2),
we have $\tilde e_i>e_i^*$ and $\tilde e_j\le e_j^*$.

% \paragraph{Counterfactual efforts.}
% For any target score $s\in[0,1]$, define the minimal additional effort (beyond $\etbt$) required for capability $\theta$ to reach $s$:
% \[
% e^{\req}(s;\theta)
% :=\inf\{e\ge 0:\ v(\theta,\etbt+e)\ge s\}.
% \]
% Equivalently, write $\hat e^{\etbt}(s;\theta):=e^{\req}(s;\theta)$.
% Define the counterfactual efforts
% \[
% \tilde e_i:=\hat e^{\etbt}(v_j^*;\theta_i),
% \qquad
% \tilde e_j:=\hat e^{\etbt}(v_i^*;\theta_j).
% \]
% Then by definition,
% \[
% v(\theta_i,\etbt+\tilde e_i)\ge v_j^*,
% \qquad
% v(\theta_j,\etbt+\tilde e_j)\ge v_i^*.
% \]
% Moreover, since $v_j^*>v_i^*$ and $v(\theta,\cdot)$ is nondecreasing in effort (C2),
% we have $\tilde e_i>e_i^*$ and $\tilde e_j\le e_j^*$.

\paragraph{Effort-gap comparison.}
Let
\[
\Delta_{\theta_i,\theta_j}(s)
:= e^{\req}(s;\theta_j,\etbt)- e^{\req}(s;\theta_i,\etbt)
\]
denote the effort gap required for the lower-capability model $j$ to achieve score $s$ relative to the higher-capability model $i$ (at TbT level $\etbt$).
By (C3), $\Delta_{\theta_i,\theta_j}(s)$ is (weakly) nondecreasing in $s$.
Applying this with $s=v_j^*>v_i^*$ gives
\[
\Delta_{\theta_i,\theta_j}(v_j^*) \;\ge\; \Delta_{\theta_i,\theta_j}(v_i^*),
\]
i.e.,
\[
e^{\req}(v_j^*;\theta_j,\etbt)-e^{\req}(v_j^*;\theta_i,\etbt)
\;\ge\;
e^{\req}(v_i^*;\theta_j,\etbt)-e^{\req}(v_i^*;\theta_i,\etbt).
\]
Using $e^{\req}(v_i^*;\theta_i,\etbt)=e_i^*$ and $e^{\req}(v_j^*;\theta_j,\etbt)=e_j^*$, and the
definitions $\tilde e_i:=e^{\req}(v_j^*;\theta_i,\etbt)$ and $\tilde e_j:=e^{\req}(v_i^*;\theta_j,\etbt)$, this becomes
\begin{equation}
\label{eq:effort-ineq}
\tilde e_i-e_i^* \;\le\; e_j^*-\tilde e_j.
\end{equation}

By convexity and monotonicity of $c$ (\Cref{ass:cost}),
\Cref{eq:effort-ineq} implies
\begin{equation}
\label{eq:cost-ineq}
c(\tilde e_i)-c(e_i^*) \;\le\; c(e_j^*)-c(\tilde e_j).
\end{equation}

\paragraph{Nash inequalities.}
Since $\textbf{e}^*$ is a PNE, neither model can gain by deviating unilaterally. Let
$\textbf{v}^*=(v_1^*,\dots,v_n^*)$ be the equilibrium score profile and let $r_k(\cdot)$
denote the rank of model $k$ under a given score profile (with an arbitrary but fixed
deterministic tie-breaking rule). Then
\begin{align*}
R_{r_i(\textbf{v}^*)}-c(e_i^*)
&\ge
R_{r_i\!\bigl(v(\theta_i,\etbt+\tilde e_i),\,\textbf{v}_{-i}^*\bigr)}-c(\tilde e_i),\\
R_{r_j(\textbf{v}^*)}-c(e_j^*)
&\ge
R_{r_j\!\bigl(v(\theta_j,\etbt+\tilde e_j),\,\textbf{v}_{-j}^*\bigr)}-c(\tilde e_j).
\end{align*}
Rearranging terms gives
\begin{align}
\label{eq:nash-i}
R_{r_i\!\bigl(v(\theta_i,\etbt+\tilde e_i),\,\textbf{v}_{-i}^*\bigr)}-R_{r_i(\textbf{v}^*)}
&\le c(\tilde e_i)-c(e_i^*),\\
\label{eq:nash-j}
R_{r_j(\textbf{v}^*)}-R_{r_j\!\bigl(v(\theta_j,\etbt+\tilde e_j),\,\textbf{v}_{-j}^*\bigr)}
&\ge c(e_j^*)-c(\tilde e_j).
\end{align}
Here \Cref{eq:nash-i} is the reward increase when $i$ raises its effort to (at least) match $j$'s score $v_j^*$, and
\Cref{eq:nash-j} is the reward decrease when $j$ lowers its effort to (at least) match $i$'s score $v_i^*$.

\paragraph{Reward comparison via strict overtaking.}
Fix any $\varepsilon>0$, and define the $\varepsilon$-overtake effort for model $i$:
\[
\tilde e_i^{\,\varepsilon}:=\inf\{e\ge 0:\ v(\theta_i,\etbt+e)\ge v_j^*+\varepsilon\}.
\]
Then $v(\theta_i,\etbt+\tilde e_i^{\,\varepsilon})>v_j^*$, so model $i$ strictly outranks model $j$ under \emph{any}
deterministic tie-breaking rule. Hence the reward increase of model $i$ from deviating to $\tilde e_i^{\,\varepsilon}$
is at least the reward decrease of model $j$ when it deviates to achieve $v_i^*$:
\begin{equation}
\label{eq:reward-compare}
R_{r_i\!\bigl(v(\theta_i,\etbt+\tilde e_i^{\,\varepsilon}),\,\textbf{v}_{-i}^*\bigr)}-R_{r_i(\textbf{v}^*)}
\;\ge\;
R_{r_j(\textbf{v}^*)}-R_{r_j\!\bigl(v(\theta_j,\etbt+\tilde e_j),\,\textbf{v}_{-j}^*\bigr)}.
\end{equation}

Combining \eqref{eq:nash-i}, \eqref{eq:nash-j}, and \eqref{eq:reward-compare} yields
\[
c(\tilde e_i^{\,\varepsilon})-c(e_i^*) \;\ge\; c(e_j^*)-c(\tilde e_j).
\]
Letting $\varepsilon\downarrow 0$ and using continuity of $v(\theta,\cdot)$ and $c(\cdot)$ gives
\[
c(\tilde e_i)-c(e_i^*) \;\ge\; c(e_j^*)-c(\tilde e_j).
\]

Together with \eqref{eq:cost-ineq}, we must have
\[
c(\tilde e_i)-c(e_i^*) =  c(e_j^*)-c(\tilde e_j).
\]

However, since $v_i^*<v_j^*$ by assumption, model $i$ can strictly improve its rank by exerting effort slightly above
$\tilde e_i$ (i.e., achieving $v_j^*+\varepsilon$), which yields a \emph{strictly} higher reward whenever rewards are
strictly decreasing in rank. This would force a strict inequality in the Nash bound for $i$, contradicting equality.

Therefore \Cref{eq:misorder-assump} is impossible, and for any $\theta_i>\theta_j$ we must have
\[
v(\theta_i,\etbt+e_i^*) \;\ge\; v(\theta_j,\etbt+e_j^*).
\]
In particular, equilibrium scores are capability-consistent (up to ties).

\end{proof}

\paragraph{Proof for \Cref{prop:zero-eq-tbt}}
\begin{proof}
\label{proof:prop:zero-eq-tbt}
Fix any tune-before-test adjustment level $\etbt\ge 0$.
At the all-zero \emph{additional-effort} profile $\textbf{e}=(0,\dots,0)$, the realized scores are
$s_i(\etbt)=v(\theta_i,\etbt)$ for $i=1,\dots,n$.
Since models are indexed so that $\theta_1>\theta_2>\cdots>\theta_n$ and $v(\theta,e)$ is increasing in $\theta$ (C1),
we have
$s_1(\etbt)> s_2(\etbt)> \cdots > s_n(\etbt)$.
Hence the payoff of model $k$ at this profile is $U_k(0; \textbf{0}_{-k}, \etbt) = R_k$ for $k=1,\dots,n$.

Fix a model $r\in\{2,\dots,n\}$.
Consider any unilateral deviation $e_r'>0$ by model $r$.
Because rewards depend only on ranks and $c$ is nondecreasing, among all deviations that improve model $r$'s rank,
the cheapest such deviation is to \emph{just overtake} some model currently above it, i.e., to move from rank $r$
to some rank $k<r$ by achieving a score strictly greater than $s_k(\etbt)$.
Define the minimal additional effort needed for model $r$ to strictly beat score $s_k(\etbt)$ as
\[
e^{\req}\!\bigl(s_k(\etbt);\theta_r,\etbt\bigr)
:=\inf\{e\ge 0:\ v(\theta_r,\etbt+e)>s_k(\etbt)\}.
\]
Such a deviation yields utility
\[
R_k - c\!\left(e^{\req}\!\bigl(s_k(\etbt);\theta_r,\etbt\bigr)\right).
\]
Therefore, $e_r=0$ is a best response for model $r$ at the all-zero profile if and only if
\[
R_r \;\ge\; R_k - c\!\left(e^{\req}\!\bigl(s_k(\etbt);\theta_r,\etbt\bigr)\right)
\qquad\text{for all } k<r,
\]
equivalently,
\[
c\!\left(e^{\req}\!\bigl(s_k(\etbt);\theta_r,\etbt\bigr)\right)\ \ge\ R_k-R_r
\qquad\text{for all } k<r.
\]
and we shorthand $e^\req_r(\etbt):= e^{\req}\!\bigl(s_{r-1}(\etbt);\theta_r,\etbt\bigr)$. 

Finally, since overtaking a higher-ranked model requires weakly more effort than overtaking the adjacent model $r-1$,
it suffices to check the adjacent deviation $k=r-1$, yielding \Cref{eq:zero-eq-condition}.
\end{proof}

\paragraph{Proof for \Cref{thm:no-equilibrium}}
\begin{proof}
\label{proof:thm:no-equilibrium}
We first show that if a PNE exists, it must be the all-zero additional-effort profile; we then show that under the stated condition, the all-zero profile cannot be a PNE.

Fix $\etbt\geq 0$ and suppose $\textbf{e}^*$ is a PNE.
By \Cref{prop:order-preserve}, post-effort scores are weakly ordered by capability at equilibrium:
if $\theta_i>\theta_j$, then $v(\theta_i,\etbt+e_i^*)\ge v(\theta_j,\etbt+e_j^*)$.
Under deterministic tie-breaking that favors higher capability, this implies that model $n$ (the lowest capability $\theta_n$) is last-ranked at effort level $e^*$. Since at any effort profile, the last-ranked model receives reward $R_n$ regardless of its own effort, thus model $n$ can weakly improve its utility by reducing its effort to zero: its reward cannot decrease, while its cost weakly decreases, since $c$ is nondecreasing and $c(0)=0$.
Therefore, in any PNE, we must have $e_n^*=0$.

Now consider model $n-1$. Given $e_n^*=0$, if model $n-1$ sets effort to zero as well, its score becomes $v(\theta_{n-1},\etbt)$, which is weakly higher than $v(\theta_n,\etbt)$ by the assumption that $v(\theta, e)$ is monotonic in capability (C1).
Under the same tie-breaking rule, model $n-1$ cannot fall below model $n$ by choosing zero effort, and its cost decreases.
Hence $e_{n-1}^*=0$ as well.
Proceeding inductively, we obtain $e_i^*=0$ for all $i$, so any PNE must be $\textbf{e}^*=(0,\dots,0)$.

Next, we show that under the stated condition \cref{eq:zero-eq-not-hold-condition}, all-zero is not a PNE.
At $\textbf{e}=(0,\dots,0)$, baseline scores are $s_i(\etbt)=v(\theta_i,\etbt)$.
If for some $r\in\{2,\dots,n\}$,
\[
c \Big({e}_r^\req \big(\etbt\big)\Big) < R_{r-1}-R_r,
\]
then model $r$ has a profitable deviation, namely that it can exert effort $\hat{e} = {e}^\req_r\big(\etbt\big)$, 
so that $v(\theta_r,\etbt+\hat e) > s_{r-1}(\etbt)$, thereby overtaking model $r-1$ and improving its reward from $R_r$ to at least $R_{r-1}$.
The resulting utility gain is at least $R_{r-1}-R_r$ while the incurred cost is $c(\hat e)$, which is strictly smaller by assumption.
Therefore, the deviation is profitable, so the all-zero profile is not a PNE. As a result, no PNE exists in this case.
\end{proof}

\section{Proof in \Cref{sec:tbt}}
%\paragraph{Statement and Proof for \Cref{prop:saturation-rank-preservation}}
\paragraph{Proof for \Cref{prop:saturation-rank-preservation}}
% \begin{proposition}[Rank preservation at saturation]
% \label{prop:saturation-rank-preservation}
% Suppose the post-effort score $v(\theta,e)$ satisfies
% $\partial_\theta v(\theta,e) > 0$ (C1).
% Then the saturated score
% $v^\infty(\theta) := \lim_{e\to\infty} v(\theta,e)$ is nondecreasing in $\theta$.
% \end{proposition}
\begin{proof}
\label{proof:prop:saturation-homogenous-scaling}
Fix any pair $(i,j)$ with $\theta_i > \theta_j$ and define the score gap at effort level $e\geq 0$ as
$d_{ij}(e) := v(\theta_i,e) - v(\theta_j,e)$. By C1 ($\partial_\theta v > 0$), we have $d_{ij}(e) > 0$ for all $e$. Hence,
\[
v^\infty(\theta_i) - v^\infty(\theta_j)
= \lim_{e\to\infty}\bigl(v(\theta_i,e)-v(\theta_j,e)\bigr)
= \lim_{e\to\infty} d_{ij}(e) > 0,
\]
so $v^\infty(\theta)$ is increasing in $\theta$. 
% If additionally we have nonvanishing asymptotic gap, namely $\liminf_{e\to\infty} d_{ij}(e) > 0$, then $v^\infty(\theta_i) - v^\infty(\theta_j) > 0$, proving strict preservation.
\end{proof}

\paragraph{Proof for \Cref{lemma:tbt-increase-cost-of-climbing}}
\begin{proof}
\label{proof:lemma:tbt-increase-cost-of-climbing}
Fix $r\in\{2,\dots,n\}$ with $\theta_{r-1}>\theta_r$, recall the baseline score for model $r-1$ with TbT adjustment level $\etbt$ is
\[
s_{r-1}(\etbt)\ :=\ v(\theta_{r-1},\etbt).
\]
By C2, $s_{r-1}(\etbt)$ is non-decreasing in $\etbt$.

Let $e^\req(s;\theta):=\inf\{e\ge 0:\ v(\theta,e)\ge s\}$ be the minimal \emph{total} effort to reach score $s$.
Define the (weak) catch-up effort for model $r$ at baseline $\etbt$ by
\[
\delta_r(\etbt)\ :=\  e^\req \!\big(s_{r-1}(\etbt);\theta_r\big)\;-\;\etbt .
\]
% (Indeed, $\etbt+\delta_r(\etbt)$ is exactly the minimal total effort for capability $\theta_r$ to reach
% the score $s_{r-1}(\etbt)$.)

Since $s_{r-1}(\etbt)$ is achieved by capability $\theta_{r-1}$ at effort $\etbt$, we also have
$e^\req (s_{r-1}(\etbt);\theta_{r-1})=\etbt$, hence
\[
\delta_r(\etbt)
=
e^\req \!\big(s_{r-1}(\etbt);\theta_r\big)
-
 e^\req \!\big(s_{r-1}(\etbt);\theta_{r-1}\big).
\]
By (C3), the effort gap
$e^\req (s;\theta_r)- e^\req (s;\theta_{r-1})$ is nondecreasing in $S$; composing with the nondecreasing map
$\etbt\mapsto s_{r-1}(\etbt)$ implies $\delta_r(\etbt)$ is nondecreasing in $\etbt$.

Finally, the strict ``just-overtake'' effort $e^{\req}_r(\etbt)$ is obtained by requiring
$v(\theta_r,\etbt+e)>s_{r-1}(\etbt)$ instead of $\ge$; by continuity of $v(\theta_r,\cdot)$ this differs only by an
arbitrarily small $\varepsilon>0$, so $e^\req_r(\etbt)$ is also nondecreasing in $\etbt$.

Since $c$ is nondecreasing, 
$c(e^\req_r(\etbt))$ is nondecreasing as well.
\end{proof}

\paragraph{Proof for \Cref{prop:tbt-monotonic-incentive}}
\begin{proof}
\label{proof:prop:tbt-monotonic-incentive}
Given $0 \leq \etbt_1 < \etbt_2$, and assume the all-zero additional-effort profile $\textbf{e}=(0,\dots,0)$ is a PNE of the induced follower game under baseline $\etbt_1$, we know from \Cref{prop:zero-eq-tbt} that $\forall r\in \{2, \cdots n\}$, we have
\begin{align*}
    c (e^\req_r(\etbt_1)) \geq R_{r-1} - R_r.
\end{align*}
Then according to \Cref{lemma:tbt-increase-cost-of-climbing}, the cost of "just-overtake" effort $c(e^\req_r(\etbt))$ is monotonically increasing in $\etbt$, which means:
\begin{align*}
    c (e^\req_r(\etbt_2)) \geq  c (e^\req_r(\etbt_1)) \geq R_{r-1} - R_r,
\end{align*}
which implies that $\textbf{e} = (0, \cdots, 0)$ is also a PNE under $\etbt_2$.
\end{proof}

\paragraph{Proof for \Cref{cor:tbt-stabilizes-incentives}}
\begin{proof}
\label{proof:cor:tbt-stabilizes-incentives}
By definition of ${\etbt}^\star$, for any $\etbt\ge {\etbt}^\star$ we have
$c(e^\req_r(\etbt))\ge R_{r-1}-R_r$ for $r=\{2,\dots,n\}$, so the all-zero profile is a PNE by \Cref{prop:zero-eq-tbt}.
\end{proof}

\paragraph{Proof for \Cref{prop:powerlaw-linear-cost-stabilizing-tbt}}

\begin{proof}
Since $\sigma$ is strictly increasing, the catch-up condition
$v(\theta_r,\etbt+\delta)\ge v(\theta_{r-1},\etbt)$ is equivalent to
\[
\tilde v(\theta_r,\etbt+\delta)
\;\ge\;
\frac{v(\theta_{r-1},\etbt)-L(\theta_r)}{U(\theta_r)-L(\theta_r)}.
\]
Applying $\sigma^{-1}$ and the scaling law for $\tilde v(\theta_r,\cdot)$, this becomes
\[
\alpha(\theta_r)+\beta(\theta_r)\log(1+\etbt+\delta)
\;\ge\;
\sigma^{-1}\!\Bigl(\frac{v(\theta_{r-1},\etbt)-L(\theta_r)}{U(\theta_r)-L(\theta_r)}\Bigr).
\]
Solving with equality yields the minimal additional effort:
\[
1+\etbt+\delta
=
\exp\!\Bigl(\tfrac{1}{\beta(\theta_r)}\Bigl[
\sigma^{-1}\!\bigl(\tfrac{v(\theta_{r-1},\etbt)-L(\theta_r)}{U(\theta_r)-L(\theta_r)}\bigr)
-\alpha(\theta_r)\Bigr]\Bigr).
\]
Define $e_r^{\req}(\etbt)$ to be this minimal $\delta$, then subtracting $(1+\etbt)$ gives
\[
e_r^{\req}(\etbt)
=
\exp\!\Bigl(\tfrac{1}{\beta(\theta_r)}\Bigl[
\sigma^{-1}\!\bigl(\tfrac{v(\theta_{r-1},\etbt)-L(\theta_r)}{U(\theta_r)-L(\theta_r)}\bigr)
-\alpha(\theta_r)\Bigr]\Bigr)
-(1+\etbt).
\]
Using the scaling law for $\theta_{r-1}$,
\[
\sigma^{-1}\!\bigl(\tilde v(\theta_{r-1},\etbt)\bigr)
=
\alpha(\theta_{r-1})+\beta(\theta_{r-1})\log(1+\etbt),
\]
and the definition of $\gamma_r=\beta(\theta_r)/\beta(\theta_{r-1})$, we can rewrite
the leading term as
\[
\exp\!\Bigl(\frac{\bar\alpha_r(\etbt)}{\beta(\theta_r)}\Bigr)\,(1+\etbt)^{1/\gamma_r},
\]
where $\bar\alpha_r(\etbt)=\sigma^{-1}(\tilde v(\theta_{r-1},\etbt))-\alpha(\theta_r)$,
yielding the stated expression.

With costs $c(e)\ge \kappa e$, any deviation that uses additional effort $\delta$
incurs cost at least $\kappa\delta$, so an adjacent-overtake deviation is unprofitable
whenever $e_r^{\req}(\etbt)\ge \rho_r$.

Finally, in the regime where the leading term in $e_r^{\req}(\etbt)$ dominates
$(1+\etbt)$, the condition $e_r^{\req}(\etbt)\ge \rho_r$ is well-approximated by
\[
\exp\!\Bigl(\frac{\bar\alpha_r(\etbt)}{\beta(\theta_r)}\Bigr)\,(1+\etbt)^{1/\gamma_r}
\;\gtrsim\; \rho_r.
\]
Treating $\bar\alpha_r(\etbt)$ as approximately constant over the relevant range
(or evaluating it at $\etbt=0$ to obtain a conservative rule of thumb) gives
\[
1+\etbt \;\gtrsim\;
\left(\frac{\rho_r}{\exp(\bar\alpha_r(0)/\beta(\theta_r))}\right)^{\gamma_r}
=
\left(\frac{\rho_r}{e_r^{\req}(0)+1}\right)^{\gamma_r},
\]
which implies the stated scaling. Since $\lambda_r=\rho_r/e_r^{\req}(0)$, we also have
${\etbt}_r^\star=O(\lambda_r^{\gamma_r})$ up to a constant factor depending only on
$e_r^{\req}(0)$.
\end{proof}

\paragraph{Estimating $e_r^{\req}(0)$, $\gamma_r$, and $\lambda_r$ from the post-training trajectories.}
For each model $i$, we fit the generalized scaling form
$\sigma^{-1}(\tilde v_i(e))=\alpha_i+\beta_i\log(1+e)$ by linear regression of
$\sigma^{-1}(\tilde v_i)$ on $\log(1+e)$ over the observed training range,
yielding $(\hat\alpha_i,\hat\beta_i)$. For an adjacent pair $(r-1,r)$ we set
\[
\hat\gamma_r := \frac{\hat\beta_r}{\hat\beta_{r-1}}.
\]
The baseline catch-up difficulty $e_r^{\req}(0)$ is defined as the minimal additional
effort at $\etbt=0$ required for model $r$ to match model $r-1$; under the fitted scaling law,
\[
\widehat{e_r^{\req}(0)}
=
\exp\!\Bigl(\tfrac{\hat\alpha_{r-1}-\hat\alpha_r}{\hat\beta_r}\Bigr)-1.
\]
Given a reward gap (converted to effort units) $\rho_r=(R_{r-1}-R_r)/\kappa$, we form
the effective incentive $\hat\lambda_r:=\rho_r/\widehat{e_r^{\req}(0)}$.

As a nonparametric check, we also estimate a local slope ratio at a common baseline
$e=\etbt$ using finite differences on the fitted trajectories:
\[
\widehat{s}_i(\etbt)\approx \frac{\hat v_i(\etbt+h)-\hat v_i(\etbt)}{h},
\qquad
\widehat{\gamma}^{\,\mathrm{slope}}_r(\etbt):=\frac{\widehat{s}_r(\etbt)}{\widehat{s}_{r-1}(\etbt)}.
\]

\newpage
\section{Additional Empirical Results}
\label{app:empirical}
\begin{figure}
    \centering
    \includegraphics[width=0.98\linewidth]{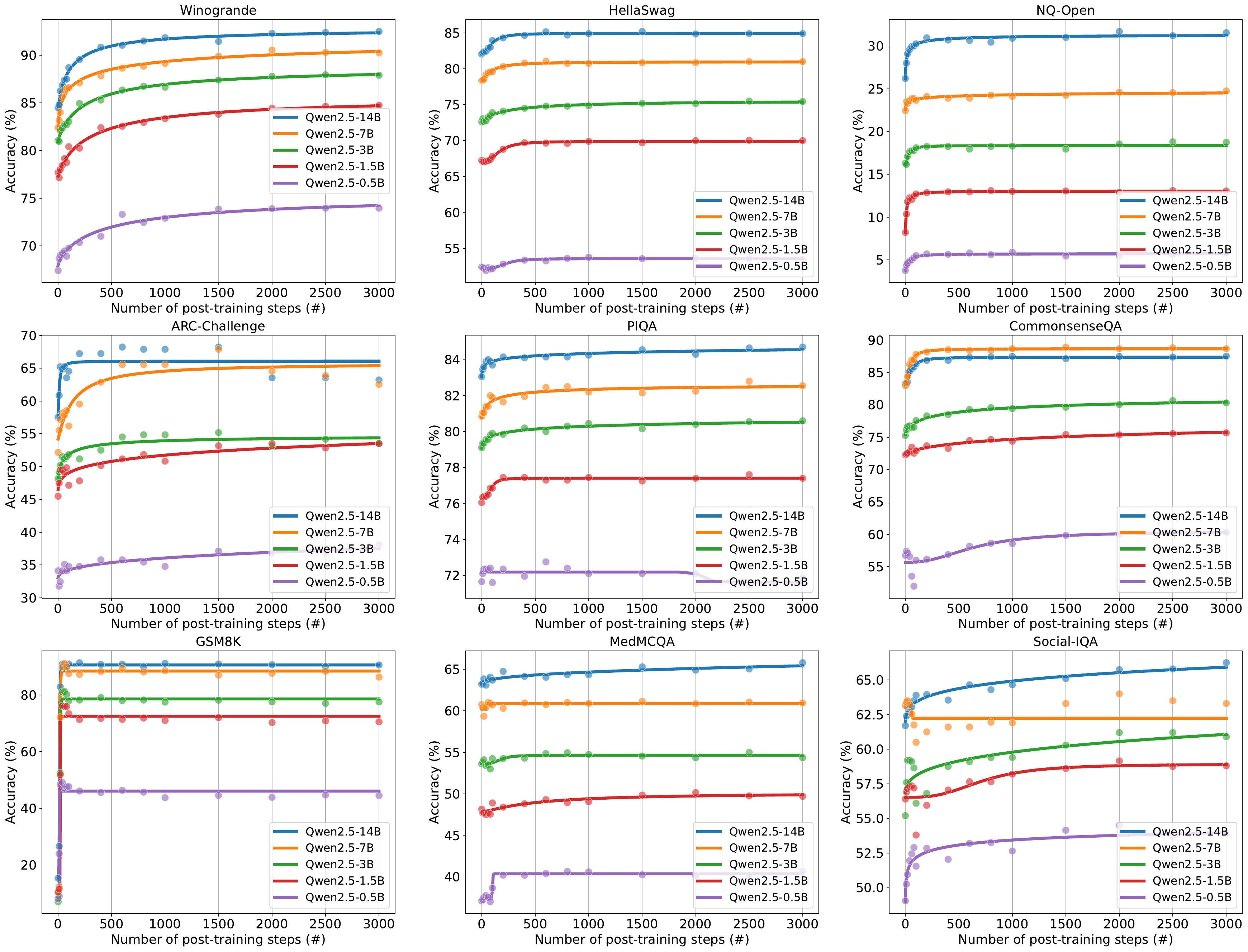}
    \caption{
    Continued post-training trajectories of \textit{Qwen} models of different sizes on nine benchmarks. Here, we use model size as a proxy for the model’s latent capability $\theta$. 
    The $x$-axis denotes the number of post-training steps, reflecting post-training effort $e$. 
    The $y$-axis denotes accuracy on the validation set, i.e., $v(\theta, e)$. 
    For each model, we fit a curve following Equation~\eqref{eq:log-score}. 
    }
    \label{fig:traj_full}
\end{figure}

\paragraph{Setting}
We conduct our emprical study on nine benchmarks, \texttt{Winogrande}~\citep{winogrande}, \texttt{HellaSwag}~\citep{hellaswag}, \texttt{NQ-Open}~\citep{nq_open}, \texttt{ARC-Challenge}~\citep{arc}, \texttt{Piqa}~\citep{piqa}, \texttt{CommonsenseQA}~\citep{commonsenseqa}, \texttt{Gsm8k}~\citep{gsm8k}, \texttt{MedMcQA}~\citep{medmcqa}, and \texttt{Social-IQA}~\citep{social_iqa}.
Each model is trained with LoRA~\citep{Hu2021LoRALA,peft} (rank 8, $\alpha$=32) and AdamW optimizer~\citep{Loshchilov2017DecoupledWD} (weight decay 0.01 and learning rate 5e-5).
We use a batch size of 8, so each training step corresponds to eight training data points.
For those benchmarks without a validation split, we randomly allocate 20\% of the training data as the validation set.

\paragraph{Results}
In Figure~\ref{fig:traj_full}, we present the results of controlled post-training experiments for all nine benchmarks.
The assumption~\ref{ass:post-effort-score} still largely holds in most benchmarks, despite a few anomalies.
In addition, on benchmarks like \texttt{HellaSwag}, continued post-training cannot make a smaller model catch up to a larger one. 
This is not surprising, since model sizes (the proxy for $\theta$) differ significantly in our controlled setting. 
As a result, there is no incentive to conduct strategic post-training in this scenario. 
We argue, however, that in real-world settings with models from diverse sources, $\theta$ will not differ so dramatically.

%% file: references.bib
@article{dominguez2024training,
  title={Training on the test task confounds evaluation and emergence},
  author={Dominguez-Olmedo, Ricardo and Dorner, Florian E and Hardt, Moritz},
  journal={arXiv preprint arXiv:2407.07890},
  year={2024}
}

@inproceedings{liu2022strategic,
  title={Strategic ranking},
  author={Liu, Lydia T and Garg, Nikhil and Borgs, Christian},
  booktitle={International Conference on Artificial Intelligence and Statistics},
  pages={2489--2518},
  year={2022},
  organization={PMLR}
}

@article{zhang2025train,
  title={Train-before-Test Harmonizes Language Model Rankings},
  author={Zhang, Guanhua and Dominguez-Olmedo, Ricardo and Hardt, Moritz},
  journal={arXiv preprint arXiv:2507.05195},
  year={2025}
}

@article{ruan2024observational,
  title={Observational scaling laws and the predictability of langauge model performance},
  author={Ruan, Yangjun and Maddison, Chris J and Hashimoto, Tatsunori B},
  journal={Advances in Neural Information Processing Systems},
  volume={37},
  pages={15841--15892},
  year={2024}
}

@inproceedings{hardt2016strategic,
  title={Strategic classification},
  author={Hardt, Moritz and Megiddo, Nimrod and Papadimitriou, Christos and Wootters, Mary},
  booktitle={Proceedings of the 2016 ACM conference on innovations in theoretical computer science},
  pages={111--122},
  year={2016}
}

@article{kendall1938rank,
 ISSN = {00063444},
 URL = {http://www.jstor.org/stable/2332226},
 author = {M. G. Kendall},
 journal = {Biometrika},
 number = {1/2},
 pages = {81--93},
 publisher = {[Oxford University Press, Biometrika Trust]},
 title = {A New Measure of Rank Correlation},
 urldate = {2025-09-25},
 volume = {30},
 year = {1938}
}

@misc{datologyai2024textcuration,
  author       = {{DatologyAI}},
  title        = {Technical Deep-Dive: Curating Our Way to a State-of-the-Art Text Dataset},
  year         = {2024},
  month        = nov,
  howpublished = {{https://www.datologyai.com/blog/technical-deep-dive-curating-our-way-to-a-state-of-the-art-text-dataset}}
}

@article{Yang2024Qwen25TR,
  title={Qwen2.5 Technical Report},
  author={Qwen An Yang and Baosong Yang and Beichen Zhang and Binyuan Hui and Bo Zheng and Bowen Yu and Chengyuan Li and Dayiheng Liu and Fei Huang and Guanting Dong and Haoran Wei and Huan Lin and Jian Yang and Jianhong Tu and Jianwei Zhang and Jianxin Yang and Jiaxin Yang and Jingren Zhou and Junyang Lin and Kai Dang and Keming Lu and Keqin Bao and Kexin Yang and Le Yu and Mei Li and Mingfeng Xue and Pei Zhang and Qin Zhu and Rui Men and Runji Lin and Tianhao Li and Tingyu Xia and Xingzhang Ren and Xuancheng Ren and Yang Fan and Yang Su and Yi-Chao Zhang and Yunyang Wan and Yuqi Liu and Zeyu Cui and Zhenru Zhang and Zihan Qiu and Shanghaoran Quan and Zekun Wang},
  journal={ArXiv},
  year={2024},
  volume={abs/2412.15115},
  url={https://api.semanticscholar.org/CorpusID:274859421}
}

@article{singh2025leaderboard,
  title={The leaderboard illusion},
  author={Singh, Shivalika and Nan, Yiyang and Wang, Alex and D'Souza, Daniel and Kapoor, Sayash and {\"U}st{\"u}n, Ahmet and Koyejo, Sanmi and Deng, Yuntian and Longpre, Shayne and Smith, Noah A and others},
  journal={arXiv preprint arXiv:2504.20879},
  year={2025}
}

@article{phan2025humanity,
  title={Humanity's last exam},
  author={Phan, Long and Gatti, Alice and Han, Ziwen and Li, Nathaniel and Hu, Josephina and Zhang, Hugh and Zhang, Chen Bo Calvin and Shaaban, Mohamed and Ling, John and Shi, Sean and others},
  journal={arXiv preprint arXiv:2501.14249},
  year={2025}
}

@article{hendrycks2020measuring,
  title={Measuring massive multitask language understanding},
  author={Hendrycks, Dan and Burns, Collin and Basart, Steven and Zou, Andy and Mazeika, Mantas and Song, Dawn and Steinhardt, Jacob},
  journal={arXiv preprint arXiv:2009.03300},
  year={2020}
}

@article{zellers2019hellaswag,
  title={Hellaswag: Can a machine really finish your sentence?},
  author={Zellers, Rowan and Holtzman, Ari and Bisk, Yonatan and Farhadi, Ali and Choi, Yejin},
  journal={arXiv preprint arXiv:1905.07830},
  year={2019}
}

@inproceedings{Loshchilov2017DecoupledWD,
  title={Decoupled Weight Decay Regularization},
  author={Ilya Loshchilov and Frank Hutter},
  booktitle={International Conference on Learning Representations},
  year={2017},
  url={https://api.semanticscholar.org/CorpusID:53592270}
}

@Misc{peft,
  title =        {{PEFT}: State-of-the-art Parameter-Efficient Fine-Tuning methods},
  author =       {Sourab Mangrulkar and Sylvain Gugger and Lysandre Debut and Younes Belkada and Sayak Paul and Benjamin Bossan and Marian Tietz},
  howpublished = {\url{https://github.com/huggingface/peft}},
  year =         {2022}
}

@article{Hu2021LoRALA,
  title={LoRA: Low-Rank Adaptation of Large Language Models},
  author={J. Edward Hu and Yelong Shen and Phillip Wallis and Zeyuan Allen-Zhu and Yuanzhi Li and Shean Wang and Weizhu Chen},
  journal={ArXiv},
  year={2021},
  volume={abs/2106.09685},
  url={https://api.semanticscholar.org/CorpusID:235458009}
}

@misc{finnveden2020extrapolating,
  author       = {Lukas Finnveden},
  title        = {Extrapolating GPT-N Performance},
  year         = {2020},
  url          = {https://www.lesswrong.com/posts/k2SNji3jXaLGhBeYP/extrapolating-gpt-n-performance},
}

@article{owen2024predictable,
  author       = {David Owen},
  title        = {How Predictable is Language Model Benchmark Performance?},
  journal      = {arXiv preprint},
  volume       = {arXiv:2401.04757},
  year         = {2024},
  url          = {https://arxiv.org/abs/2401.04757},
}

@book{topkis1998supermodularity,
 ISBN = {9780691032443},
 URL = {http://www.jstor.org/stable/j.ctt7s83q},
  author={Donald M. Topkis},
 publisher = {Princeton University Press},
 title = {Supermodularity and Complementarity},
 urldate = {2026-01-18},
 year = {1998}
}

@article{nash1950equilibirum,
  title={Equilibrium points in n-person games},
  author={Nash, John F},
  journal={Proceedings of the National Academy of Sciences},
  volume={36},
  number={1},
  pages={48--49},
  year={1950},
  publisher={National Acad Sciences}
}

@article{von2010leadership,
  title={Leadership games with convex strategy sets},
  author={Von Stengel, Bernhard and Zamir, Shmuel},
  journal={Games and Economic Behavior},
  volume={69},
  number={2},
  pages={446--457},
  year={2010},
  publisher={Elsevier}
}

@Inbook{marchesi2021leadership,
author="Marchesi, Alberto",
title="Leadership Games: Multiple Followers, Multiple Leaders, and Perfection",
bookTitle="Special Topics in Information Technology",
year="2021",
publisher="Springer International Publishing",
address="Cham",
pages="107--118",
isbn="978-3-030-62476-7",
doi="10.1007/978-3-030-62476-7_10",
url="https://doi.org/10.1007/978-3-030-62476-7_10"
}

@article{wei2021finetuned,
  title={Finetuned language models are zero-shot learners},
  author={Wei, Jason and Bosma, Maarten and Zhao, Vincent Y and Guu, Kelvin and Yu, Adams Wei and Lester, Brian and Du, Nan and Dai, Andrew M and Le, Quoc V},
  journal={arXiv preprint arXiv:2109.01652},
  year={2021}
}

@article{chen2023learning,
  title={Learning to incentivize improvements from strategic agents},
  author={Chen, Yatong and Wang, Jialu and Liu, Yang},
  journal={Transactions on Machine Learning Research},
  year={2023}
}

@article{Donoho2023DataSA,
  title={Data science at the singularity},
  author={Donoho, David},
  journal={Harvard Data Science Review},
  volume={6},
  number={1},
  year={2024},
  publisher={MIT Press}
}

@inproceedings{Duda1974PatternCA,
  title={Pattern classification and scene analysis},
  author={Richard O. Duda and Peter E. Hart},
  booktitle={A Wiley-Interscience publication},
  year={1974},
  url={https://api.semanticscholar.org/CorpusID:12946615}
}

@misc{hardt2025emerging,
  author = {Moritz Hardt},
  title = {The Emerging Science of Machine Learning Benchmarks},
  year = {2025},
  howpublished = {Online at \url{https://mlbenchmarks.org}},
  note = {Manuscript}
}

@article{Russakovsky2014ImageNetLS,
  title={ImageNet Large Scale Visual Recognition Challenge},
  author={Olga Russakovsky and Jia Deng and Hao Su and Jonathan Krause and Sanjeev Satheesh and Sean Ma and Zhiheng Huang and Andrej Karpathy and Aditya Khosla and Michael S. Bernstein and Alexander C. Berg and Li Fei-Fei},
  journal={International Journal of Computer Vision},
  year={2014},
  volume={115},
  pages={211 - 252},
  url={https://api.semanticscholar.org/CorpusID:2930547}
}

@article{He2015DeepRL,
  title={Deep Residual Learning for Image Recognition},
  author={Kaiming He and X. Zhang and Shaoqing Ren and Jian Sun},
  journal={2016 IEEE Conference on Computer Vision and Pattern Recognition (CVPR)},
  year={2015},
  pages={770-778},
  url={https://api.semanticscholar.org/CorpusID:206594692}
}

@article{Krizhevsky2012ImageNetCW,
  title={ImageNet classification with deep convolutional neural networks},
  author={Alex Krizhevsky and Ilya Sutskever and Geoffrey E. Hinton},
  journal={Communications of the ACM},
  year={2012},
  volume={60},
  pages={84 - 90},
  url={https://api.semanticscholar.org/CorpusID:195908774}
}

@inproceedings{Liao2021AreWL,
  title={Are We Learning Yet? A Meta Review of Evaluation Failures Across Machine Learning},
  author={Thomas Liao},
  booktitle={NeurIPS Datasets and Benchmarks},
  year={2021},
  url={https://api.semanticscholar.org/CorpusID:244907059}
}

@book{hardtrecht2022patterns,
  author = {Moritz Hardt and Benjamin Recht},
  title = {Patterns, predictions, and actions: Foundations of machine learning},
  year = {2022},
  publisher = {Princeton University Press}
}

@book{Goodfellow-et-al-2016,
    title={Deep Learning},
    author={Ian Goodfellow and Yoshua Bengio and Aaron Courville},
    publisher={MIT Press},
    note={\url{http://www.deeplearningbook.org}},
    year={2016}
}

@article{Dwork2015GeneralizationIA,
  title={Generalization in Adaptive Data Analysis and Holdout Reuse},
  author={Cynthia Dwork and Vitaly Feldman and Moritz Hardt and Toniann Pitassi and Omer Reingold and Aaron Roth},
  journal={ArXiv},
  year={2015},
  volume={abs/1506.02629},
  url={https://api.semanticscholar.org/CorpusID:14762349}
}

@article{Dwork2015TheRH,
  title={The reusable holdout: Preserving validity in adaptive data analysis},
  author={Cynthia Dwork and Vitaly Feldman and Moritz Hardt and Toniann Pitassi and Omer Reingold and Aaron Roth},
  journal={Science},
  year={2015},
  volume={349},
  pages={636 - 638},
  url={https://api.semanticscholar.org/CorpusID:15569600}
}

@inproceedings{Blum2015TheLA,
  title={The Ladder: A Reliable Leaderboard for Machine Learning Competitions},
  author={Avrim Blum and Moritz Hardt},
  booktitle={International Conference on Machine Learning},
  year={2015},
  url={https://api.semanticscholar.org/CorpusID:1493191}
}

@article{Mania2019ModelSM,
  title={Model Similarity Mitigates Test Set Overuse},
  author={Horia Mania and John Miller and Ludwig Schmidt and Moritz Hardt and Benjamin Recht},
  journal={ArXiv},
  year={2019},
  volume={abs/1905.12580},
  url={https://api.semanticscholar.org/CorpusID:168169971}
}

@inproceedings{Yadav2019ColdCT,
  title={Cold Case: The Lost MNIST Digits},
  author={Chhavi Yadav and L{\'e}on Bottou},
  booktitle={Neural Information Processing Systems},
  year={2019},
  url={https://api.semanticscholar.org/CorpusID:166227957}
}

@inproceedings{Recht2019DoIC,
  title={Do ImageNet Classifiers Generalize to ImageNet?},
  author={Benjamin Recht and Rebecca Roelofs and Ludwig Schmidt and Vaishaal Shankar},
  booktitle={International Conference on Machine Learning},
  year={2019},
  url={https://api.semanticscholar.org/CorpusID:67855879}
}

@article{Miller2020TheEO,
  title={The Effect of Natural Distribution Shift on Question Answering Models},
  author={John Miller and Karl Krauth and Benjamin Recht and Ludwig Schmidt},
  journal={ArXiv},
  year={2020},
  volume={abs/2004.14444},
  url={https://api.semanticscholar.org/CorpusID:216867120}
}

@article{Kornblith2018DoBI,
  title={Do Better ImageNet Models Transfer Better?},
  author={Simon Kornblith and Jonathon Shlens and Quoc V. Le},
  journal={2019 IEEE/CVF Conference on Computer Vision and Pattern Recognition (CVPR)},
  year={2018},
  pages={2656-2666},
  url={https://api.semanticscholar.org/CorpusID:43928547}
}

@article{Deng2009ImageNetAL,
  title={ImageNet: A large-scale hierarchical image database},
  author={Jia Deng and Wei Dong and Richard Socher and Li-Jia Li and K. Li and Li Fei-Fei},
  journal={2009 IEEE Conference on Computer Vision and Pattern Recognition},
  year={2009},
  pages={248-255},
  url={https://api.semanticscholar.org/CorpusID:57246310}
}

@article{Salaudeen2024ImageNotAC,
  title={ImageNot: A contrast with ImageNet preserves model rankings},
  author={Olawale Salaudeen and Moritz Hardt},
  journal={ArXiv},
  year={2024},
  volume={abs/2404.02112},
  url={https://api.semanticscholar.org/CorpusID:268857319}
}

@article{Salaudeen2025MeasurementTM,
  title={Measurement to Meaning: A Validity-Centered Framework for AI Evaluation},
  author={Olawale Salaudeen and Anka Reuel and Ahmed M. Ahmed and Suhana Bedi and Zachary Robertson and Sudharsan Sundar and Ben Domingue and Angelina Wang and Oluwasanmi Koyejo},
  journal={ArXiv},
  year={2025},
  volume={abs/2505.10573},
  url={https://api.semanticscholar.org/CorpusID:278715024}
}

@article{Zhang2024InherentTB,
  title={Inherent Trade-Offs between Diversity and Stability in Multi-Task Benchmarks},
  author={Guanhua Zhang and Moritz Hardt},
  journal={ArXiv},
  year={2024},
  volume={abs/2405.01719},
  url={https://api.semanticscholar.org/CorpusID:269587753}
}

@article{Liang2023HolisticEO,
  title={Holistic Evaluation of Language Models},
  author={Percy Liang and Rishi Bommasani and Tony Lee and Dimitris Tsipras and Dilara Soylu and Michihiro Yasunaga and Yian Zhang and Deepak Narayanan and Yuhuai Wu and Ananya Kumar and Benjamin Newman and Binhang Yuan and Bobby Yan and Ce Zhang and Christian Cosgrove and Christopher D. Manning and Christopher R'e and Diana Acosta-Navas and Drew A. Hudson and E. Zelikman and Esin Durmus and Faisal Ladhak and Frieda Rong and Hongyu Ren and Huaxiu Yao and Jue Wang and Keshav Santhanam and Laurel J. Orr and Lucia Zheng and Mert Yuksekgonul and Mirac Suzgun and Nathan S. Kim and Neel Guha and Niladri S. Chatterji and O. Khattab and Peter Henderson and Qian Huang and Ryan Chi and Sang Michael Xie and Shibani Santurkar and Surya Ganguli and Tatsunori Hashimoto and Thomas F. Icard and Tianyi Zhang and Vishrav Chaudhary and William Wang and Xuechen Li and Yifan Mai and Yuhui Zhang and Yuta Koreeda},
  journal={Annals of the New York Academy of Sciences},
  year={2023},
  volume={1525},
  pages={140 - 146},
  url={https://api.semanticscholar.org/CorpusID:253553585}
}

@misc{open-llm-leaderboard-v2,
  author = {Clémentine Fourrier and Nathan Habib and Alina Lozovskaya and Konrad Szafer and Thomas Wolf},
  title = {Open LLM Leaderboard v2},
  year = {2024},
  publisher = {Hugging Face},
  howpublished = "\url{https://huggingface.co/spaces/open-llm-leaderboard/open_llm_leaderboard}",
}

@article{Touvron2023Llama2O,
  title={Llama 2: Open Foundation and Fine-Tuned Chat Models},
  author={Hugo Touvron and Louis Martin and Kevin R. Stone and Peter Albert and Amjad Almahairi and Yasmine Babaei and Niko-lay Bashlykov and Soumya Batra and Prajjwal Bhargava and Shruti Bhosale and Daniel M. Bikel and Lukas Blecher and Cristian Canton Ferrer and Moya Chen and Guillem Cucurull and David Esiobu and Jude Fernandes and Jeremy Fu and Wenyin Fu and Brian Fuller and Cynthia Gao and Vedanuj Goswami and Naman Goyal and Anthony S. Hartshorn and Saghar Hosseini and Rui Hou and Hakan Inan and Marcin Kardas and Viktor Kerkez and Madian Khabsa and Isabel M. Kloumann and Artem Korenev and Punit Singh Koura and Marie-Anne Lachaux and Thibaut Lavril and Jenya Lee and Diana Liskovich and Yinghai Lu and Yuning Mao and Xavier Martinet and Todor Mihaylov and Pushkar Mishra and Igor Molybog and Yixin Nie and Andrew Poulton and Jeremy Reizenstein and Rashi Rungta and Kalyan Saladi and Alan Schelten and Ruan Silva and Eric Michael Smith and R. Subramanian and Xia Tan and Binh Tang and Ross Taylor and Adina Williams and Jian Xiang Kuan and Puxin Xu and Zhengxu Yan and Iliyan Zarov and Yuchen Zhang and Angela Fan and Melissa Hall Melanie Kambadur and Sharan Narang and Aur'elien Rodriguez and Robert Stojnic and Sergey Edunov and Thomas Scialom},
  journal={ArXiv},
  year={2023},
  volume={abs/2307.09288},
  url={https://api.semanticscholar.org/CorpusID:259950998}
}

@article{Zhou2023DontMY,
  title={Don't Make Your LLM an Evaluation Benchmark Cheater},
  author={Kun Zhou and Yutao Zhu and Zhipeng Chen and Wentong Chen and Wayne Xin Zhao and Xu Chen and Yankai Lin and Ji-Rong Wen and Jiawei Han},
  journal={ArXiv},
  year={2023},
  volume={abs/2311.01964},
  url={https://api.semanticscholar.org/CorpusID:265019021}
}

@article{Schaeffer2023AreEA,
  title={Are Emergent Abilities of Large Language Models a Mirage?},
  author={Rylan Schaeffer and Brando Miranda and Oluwasanmi Koyejo},
  journal={ArXiv},
  year={2023},
  volume={abs/2304.15004},
  url={https://api.semanticscholar.org/CorpusID:258418299}
}

@article{Manheim2018CategorizingVO,
  title={Categorizing Variants of Goodhart's Law},
  author={David Manheim and Scott Garrabrant},
  journal={ArXiv},
  year={2018},
  volume={abs/1803.04585},
  url={https://api.semanticscholar.org/CorpusID:4715794}
}

@article{Bordt2024HowMC,
  title={How much can we forget about Data Contamination?},
  author={Sebastian Bordt and Suraj Srinivas and Valentyn Boreiko and Ulrike von Luxburg},
  journal={ArXiv},
  year={2024},
  volume={abs/2410.03249},
  url={https://api.semanticscholar.org/CorpusID:273163321}
}

@article{Yang2023RethinkingBA,
  title={Rethinking Benchmark and Contamination for Language Models with Rephrased Samples},
  author={Shuo Yang and Wei-Lin Chiang and Lianmin Zheng and Joseph Gonzalez and Ion Stoica},
  journal={ArXiv},
  year={2023},
  volume={abs/2311.04850},
  url={https://api.semanticscholar.org/CorpusID:265050721}
}

@article{Jiang2024InvestigatingDC,
  title={Investigating Data Contamination for Pre-training Language Models},
  author={Minhao Jiang and Ken Ziyu Liu and Ming Zhong and Rylan Schaeffer and Siru Ouyang and Jiawei Han and Sanmi Koyejo},
  journal={ArXiv},
  year={2024},
  volume={abs/2401.06059},
  url={https://api.semanticscholar.org/CorpusID:266933004}
}

@article{Li2024DataCompLMIS,
  title={DataComp-LM: In search of the next generation of training sets for language models},
  author={Jeffrey Li and Alex Fang and Georgios Smyrnis and Maor Ivgi and Matt Jordan and Samir Yitzhak Gadre and Hritik Bansal and Etash Kumar Guha and Sedrick Scott Keh and Kushal Arora and Saurabh Garg and Rui Xin and Niklas Muennighoff and Reinhard Heckel and Jean-Pierre Mercat and Mayee Chen and Suchin Gururangan and Mitchell Wortsman and Alon Albalak and Yonatan Bitton and Marianna Nezhurina and Amro Abbas and Cheng-Yu Hsieh and Dhruba Ghosh and Josh Gardner and Maciej Kilian and Hanlin Zhang and Rulin Shao and Sarah Pratt and Sunny Sanyal and Gabriel Ilharco and Giannis Daras and Kalyani Marathe and Aaron Gokaslan and Jieyu Zhang and Khy-athi Chandu and Thao Nguyen and Igor Vasiljevic and Sham M. Kakade and Shuran Song and Sujay Sanghavi and Fartash Faghri and Sewoong Oh and Luke S. Zettlemoyer and Kyle Lo and Alaaeldin El-Nouby and Hadi Pouransari and Alexander Toshev and Stephanie Wang and Dirk Groeneveld and Luca Soldani and Pang Wei Koh and Jenia Jitsev and Thomas Kollar and Alexandros G. Dimakis and Yair Carmon and Achal Dave and Ludwig Schmidt and Vaishaal Shankar},
  journal={ArXiv},
  year={2024},
  volume={abs/2406.11794},
  url={https://api.semanticscholar.org/CorpusID:270560330}
}

@article{Guha2025OpenThoughtsDR,
  title={OpenThoughts: Data Recipes for Reasoning Models},
  author={Etash Kumar Guha and Ryan Marten and Sedrick Scott Keh and Negin Raoof and Georgios Smyrnis and Hritik Bansal and Marianna Nezhurina and Jean-Pierre Mercat and Trung Vu and Zayne Sprague and Ashima Suvarna and Ben Feuer and Liangyu Chen and Zaid Khan and Eric Frankel and Sachin Grover and Caroline Choi and Niklas Muennighoff and Shiye Su and Wanjia Zhao and John Yang and Shreyas Pimpalgaonkar and Kartik Sharma and Charlie Cheng-Jie Ji and Yichuan Deng and Sarah Pratt and Vivek Ramanujan and Jon Saad-Falcon and Jeffrey Li and Achal Dave and Alon Albalak and Kushal Arora and Blake Wulfe and Chinmay Hegde and Greg Durrett and Sewoong Oh and Mohit Bansal and Saadia Gabriel and Aditya Grover and Kai-Wei Chang and Vaishaal Shankar and Aaron Gokaslan and Mike A. Merrill and Tatsunori Hashimoto and Yejin Choi and Jenia Jitsev and Reinhard Heckel and Maheswaran Sathiamoorthy and Alexandros G. Dimakis and Ludwig Schmidt},
  journal={ArXiv},
  year={2025},
  volume={abs/2506.04178},
  url={https://api.semanticscholar.org/CorpusID:279154475}
}

@article{einav2025market,
  title         = {A Market for Accuracy: Classification under Competition},
  author        = {Einav, Ohad and Rosenfeld, Nir},
  journal       = {arXiv preprint arXiv:2502.18052},
  year          = {2025},
  eprint        = {2502.18052},
  archivePrefix = {arXiv},
  primaryClass  = {cs.LG}
}

@inproceedings{rosenfeld2024strategicml,
  title     = {Strategic ML: How to Learn With Data That ``Behaves''},
  author    = {Rosenfeld, Nir},
  booktitle = {Proceedings of the 17th ACM International Conference on Web Search and Data Mining (WSDM)},
  pages     = {1128--1131},
  year      = {2024},
  publisher = {ACM}
}

@inproceedings{bruckner2011stackelberg,
  title={Stackelberg games for adversarial prediction problems},
  author={Br{\"u}ckner, Michael and Scheffer, Tobias},
  booktitle={Proceedings of the 17th ACM SIGKDD international conference on Knowledge discovery and data mining},
  pages={547--555},
  year={2011}
}

@article{kleinberg2020classifiers,
  title={How Do Classifiers Induce Agents to Invest Effort Strategically?},
  author={Kleinberg, Jon and Raghavan, Manish},
  journal={ACM Transactions on Economics and Computation (TEAC)},
  volume={8},
  number={4},
  pages={1--23},
  year={2020},
  publisher={ACM New York, NY, USA}
}

@inproceedings{miller2020strategic,
  title={Strategic classification is causal modeling in disguise},
  author={Miller, John and Milli, Smitha and Hardt, Moritz},
  booktitle={International Conference on Machine Learning},
  pages={6917--6926},
  year={2020},
  organization={PMLR}
}

@inproceedings{sommer2025learning,
  title        = {Learning Classifiers That Induce Markets},
  author       = {Sommer, Yonatan and Hikri, Ivri and Amit, Lotan and Rosenfeld, Nir},
  booktitle    = {Proceedings of the 42nd International Conference on Machine Learning (ICML 2025)},
  year         = {2025},
  arxiv        = {2502.20012},
}

@article{Patwardhan2025GDPvalEA,
  title={GDPval: Evaluating AI Model Performance on Real-World Economically Valuable Tasks},
  author={Tejal Patwardhan and Rachel Dias and Elizabeth Proehl and Grace Kim and Michele Wang and Olivia Watkins and Sim'on Posada Fishman and Marwan Aljubeh and Phoebe Thacker and Laurance Fauconnet and Natalie S. Kim and Patrick Chao and Samuel Miserendino and Gildas Chabot and David Li and Michael Sharman and Alexandra Barr and Amelia Glaese and Jerry Tworek},
  journal={ArXiv},
  year={2025},
  volume={abs/2510.04374},
  url={https://api.semanticscholar.org/CorpusID:281843768}
}

@article{Jain2024LiveCodeBenchHA,
  title={LiveCodeBench: Holistic and Contamination Free Evaluation of Large Language Models for Code},
  author={Naman Jain and King Han and Alex Gu and Wen-Ding Li and Fanjia Yan and Tianjun Zhang and Sida Wang and Armando Solar-Lezama and Koushik Sen and Ion Stoica},
  journal={ArXiv},
  year={2024},
  volume={abs/2403.07974},
  url={https://api.semanticscholar.org/CorpusID:268379413}
}

@article{Albalak2024ASO,
  title={A Survey on Data Selection for Language Models},
  author={Alon Albalak and Yanai Elazar and Sang Michael Xie and Shayne Longpre and Nathan Lambert and Xinyi Wang and Niklas Muennighoff and Bairu Hou and Liangming Pan and Haewon Jeong and Colin Raffel and Shiyu Chang and Tatsunori Hashimoto and William Yang Wang},
  journal={ArXiv},
  year={2024},
  volume={abs/2402.16827},
  url={https://api.semanticscholar.org/CorpusID:268032975}
}

@article{Raffel2019ExploringTL,
  title={Exploring the Limits of Transfer Learning with a Unified Text-to-Text Transformer},
  author={Colin Raffel and Noam Shazeer and Adam Roberts and Katherine Lee and Sharan Narang and Michael Matena and Yanqi Zhou and Wei Li and Peter J. Liu},
  journal={J. Mach. Learn. Res.},
  year={2019},
  volume={21},
  pages={140:1-140:67},
  url={https://api.semanticscholar.org/CorpusID:204838007}
}

@article{Zhou2023InstructionFollowingEF,
  title={Instruction-Following Evaluation for Large Language Models},
  author={Jeffrey Zhou and Tianjian Lu and Swaroop Mishra and Siddhartha Brahma and Sujoy Basu and Yi Luan and Denny Zhou and Le Hou},
  journal={ArXiv},
  year={2023},
  volume={abs/2311.07911},
  url={https://api.semanticscholar.org/CorpusID:265157752}
}

@inproceedings{TjongKimSang2003IntroductionTT,
  title={Introduction to the CoNLL-2003 Shared Task: Language-Independent Named Entity Recognition},
  author={E. Tjong Kim Sang and Fien De Meulder},
  booktitle={Conference on Computational Natural Language Learning},
  year={2003},
  url={https://api.semanticscholar.org/CorpusID:2470716}
}

@misc{lecun-mnisthandwrittendigit-2010,
  author = {LeCun, Yann and Cortes, Corinna},
  howpublished = {http://yann.lecun.com/exdb/mnist/},
  lastchecked = {2016-01-14 14:24:11},
  title = {{MNIST} handwritten digit database},
  url = {http://yann.lecun.com/exdb/mnist/},
  username = {mhwombat},
  year = 2010
}

@article{Garofolo1993DarpaTA,
  title={DARPA TIMIT acoustic-phonetic continuous speech corpus},
  author={Lyons, John W},
  journal={National Institute of Standards and Technology},
  year={1993}
}

@article{Glazer2024FrontierMathAB,
  title={FrontierMath: A Benchmark for Evaluating Advanced Mathematical Reasoning in AI},
  author={Elliot Glazer and Ege Erdil and Tamay Besiroglu and Diego Chicharro and Evan Chen and Alex Gunning and Caroline Falkman Olsson and Jean-Stanislas Denain and Anson Ho and Emily de Oliveira Santos and Olli J{\"a}rviniemi and Matthew Barnett and Robert Sandler and Matej Vrzala and Jaime Sevilla and Qiuyu Ren and Elizabeth Pratt and Lionel Levine and Grant Barkley and Natalie Stewart and Bogdan Grechuk and Tetiana Grechuk and Shreepranav Varma Enugandla and Mark Wildon},
  journal={ArXiv},
  year={2024},
  volume={abs/2411.04872},
  url={https://api.semanticscholar.org/CorpusID:273877467}
}

@article{Huan2025DoesMR,
  title={Does Math Reasoning Improve General LLM Capabilities? Understanding Transferability of LLM Reasoning},
  author={Maggie Huan and Yuetai Li and Tuney Zheng and Xiaoyu Xu and Seungone Kim and Minxin Du and Radha Poovendran and Graham Neubig and Xiang Yue},
  journal={ArXiv},
  year={2025},
  volume={abs/2507.00432},
  url={https://api.semanticscholar.org/CorpusID:280146966}
}

@article{arc,
  title={Think you have Solved Question Answering? Try ARC, the AI2 Reasoning Challenge},
  author={Peter Clark and Isaac Cowhey and Oren Etzioni and Tushar Khot and Ashish Sabharwal and Carissa Schoenick and Oyvind Tafjord},
  journal={ArXiv},
  year={2018},
  volume={abs/1803.05457},
  url={https://api.semanticscholar.org/CorpusID:3922816}
}

@article{commonsenseqa,
  title={CommonsenseQA: A Question Answering Challenge Targeting Commonsense Knowledge},
  author={Alon Talmor and Jonathan Herzig and Nicholas Lourie and Jonathan Berant},
  journal={ArXiv},
  year={2019},
  volume={abs/1811.00937},
  url={https://api.semanticscholar.org/CorpusID:53296520}
}

@article{gsm8k,
  title={Training Verifiers to Solve Math Word Problems},
  author={Karl Cobbe and Vineet Kosaraju and Mo Bavarian and Mark Chen and Heewoo Jun and Lukasz Kaiser and Matthias Plappert and Jerry Tworek and Jacob Hilton and Reiichiro Nakano and Christopher Hesse and John Schulman},
  journal={ArXiv},
  year={2021},
  volume={abs/2110.14168},
  url={https://api.semanticscholar.org/CorpusID:239998651}
}

@inproceedings{hellaswag,
  title={HellaSwag: Can a Machine Really Finish Your Sentence?},
  author={Rowan Zellers and Ari Holtzman and Yonatan Bisk and Ali Farhadi and Yejin Choi},
  booktitle={Annual Meeting of the Association for Computational Linguistics},
  year={2019},
  url={https://api.semanticscholar.org/CorpusID:159041722}
}

@inproceedings{medmcqa,
  title={MedMCQA : A Large-scale Multi-Subject Multi-Choice Dataset for Medical domain Question Answering},
  author={Ankit Pal and Logesh Kumar Umapathi and Malaikannan Sankarasubbu},
  booktitle={ACM Conference on Health, Inference, and Learning},
  year={2022},
  url={https://api.semanticscholar.org/CorpusID:247763070}
}

@article{nq_open,
  title={Natural Questions: A Benchmark for Question Answering Research},
  author={Tom Kwiatkowski and Jennimaria Palomaki and Olivia Redfield and Michael Collins and Ankur P. Parikh and Chris Alberti and Danielle Epstein and Illia Polosukhin and Jacob Devlin and Kenton Lee and Kristina Toutanova and Llion Jones and Matthew Kelcey and Ming-Wei Chang and Andrew M. Dai and Jakob Uszkoreit and Quoc V. Le and Slav Petrov},
  journal={Transactions of the Association for Computational Linguistics},
  year={2019},
  volume={7},
  pages={453-466},
  url={https://api.semanticscholar.org/CorpusID:86611921}
}

@inproceedings{piqa,
  title={PIQA: Reasoning about Physical Commonsense in Natural Language},
  author={Yonatan Bisk and Rowan Zellers and Ronan Le Bras and Jianfeng Gao and Yejin Choi},
  booktitle={AAAI Conference on Artificial Intelligence},
  year={2019},
  url={https://api.semanticscholar.org/CorpusID:208290939}
}

@misc{social_iqa,
      title={SocialIQA: Commonsense Reasoning about Social Interactions}, 
      author={Maarten Sap and Hannah Rashkin and Derek Chen and Ronan LeBras and Yejin Choi},
      year={2019},
      eprint={1904.09728},
      archivePrefix={arXiv},
      primaryClass={cs.CL},
      url={https://arxiv.org/abs/1904.09728}, 
}

@misc{winogrande,
      title={WinoGrande: An Adversarial Winograd Schema Challenge at Scale}, 
      author={Keisuke Sakaguchi and Ronan Le Bras and Chandra Bhagavatula and Yejin Choi},
      year={2019},
      eprint={1907.10641},
      archivePrefix={arXiv},
      primaryClass={cs.CL},
      url={https://arxiv.org/abs/1907.10641}, 
}

@article{tal2020multiagent, 
title={Multiagent Evaluation Mechanisms}, volume={34}, url={https://ojs.aaai.org/index.php/AAAI/article/view/5543}, 
number={02}, 
journal={Proc.\ {AAAI} Conference on Artificial Intelligence}, 
author={Alon, Tal and Dobson, Magdalen and Procaccia, Ariel and Talgam-Cohen, Inbal and Tucker-Foltz, Jamie}, 
year={2020}, month={Apr.}, pages={1774-1781} 
}

@article{connelly2014tournament,
author = {Brian L. Connelly and Laszlo Tihanyi and T. Russell Crook and K. Ashley Gangloff},
title ={Tournament Theory: Thirty Years of Contests and Competitions},
journal = {Journal of Management},
volume = {40},
number = {1},
pages = {16-47},
year = {2014},
doi = {10.1177/0149206313498902},
}

@misc{fu2019contests,
      author = "Qiang Fu and Zenan Wu",
      title = "Contests: Theory and Topics",
      year = "2019",
      month = "07",
      publisher = "Oxford University Press",
      doi = "10.1093/acrefore/9780190625979.013.440",
      url = "https://oxfordre.com/economics/view/10.1093/acrefore/9780190625979.001.0001/acrefore-9780190625979-e-440"
}

@article{corchon2007theory,
	author = {Corch{\'o}n, Luis C. },
	id = {Corch{\'o}n2007},
	journal = {Review of Economic Design},
	number = {2},
	pages = {69--100},
	title = {The theory of contests: a survey},
	url = {https://doi.org/10.1007/s10058-007-0032-5},
	volume = {11},
	year = {2007},
}

@misc{liberman2015reproducible,
  author = {Liberman, Marc},
  journal = {Simons Foundation Lecture},
  title = {Reproducible research and the common task method},
  url = {https://www.simonsfoundation.org/lecture/reproducible-research-and-thecommon-task-method},
  year = {2015}}
